\documentclass{article}
\usepackage[]{graphicx}
\usepackage[]{color}
\usepackage{amsthm}
\usepackage{amssymb,latexsym,amsthm,amsmath}
\usepackage{verbatim}
\usepackage{amsfonts}
\usepackage{latexsym} 
\usepackage{tikz} 
\usetikzlibrary{shapes, shapes.geometric, arrows,positioning}
\usepackage{amsfonts}
\usepackage{amssymb,latexsym,amsmath}
\usepackage{enumitem}

\renewcommand{\le}{\leqslant}
\renewcommand{\leq}{\leqslant}
\renewcommand{\geq}{\geqslant}
\renewcommand{\ge}{\geqslant}
%% maxwidth is the original width if it is less than linewidth
%% otherwise use linewidth (to make sure the graphics do not exceed the margin)
\makeatletter
\def\maxwidth{ %
  \ifdim\Gin@nat@width>\linewidth
    \linewidth
  \else
    \Gin@nat@width
  \fi
}
\makeatother

\definecolor{fgcolor}{rgb}{0.345, 0.345, 0.345}

\usepackage{framed}
\makeatletter
 {\par\unskip\endMakeFramed%
 \at@end@of@kframe}
\makeatother

\definecolor{shadecolor}{rgb}{.97, .97, .97}
\definecolor{messagecolor}{rgb}{0, 0, 0}
\definecolor{warningcolor}{rgb}{1, 0, 1}
\definecolor{errorcolor}{rgb}{1, 0, 0}
 % an empty environment to be redefined in TeX

\usepackage{alltt}
% ---------------------------------------------------------------------- %
% Import de Packages
\usepackage[utf8]{inputenc}
\usepackage{amsmath,amssymb,indentfirst,epsfig,url}
\usepackage{makeidx} % Diagrama
\usepackage{graphicx}
\usepackage{subcaption} % Subfigure
\usepackage{multicol} % Criar 2 ou mais colunas em tabelas
\usepackage{multirow} % Criar 2 ou mais linhas em tabelas
\usepackage{hyperref} % Suporte para hipertexto, links para referências e figuras
\hypersetup{colorlinks=true, linkcolor=red, citecolor=blue, filecolor=black, urlcolor=green} % Configurações dos links 
\usepackage{verbatim} % Inclui caracteres externos de sem alteração gerado
\usepackage[affil-it]{authblk} % afiliação com affil
\usepackage{natbib} % Referências bibliográficas e afins
\bibpunct[; ]{(}{)}{,}{a}{,}{;} % Formatar as citações no texto e a lista de referências
\usepackage{amsmath}% pra colocar tipo de convergencia

% ---------------------------------------------------------------------- %
% Redefinindo comandos
 % independentes e id ->
 % independentes e id ->
 % Mínimo
 % Máximo
      % eq. em negrito      

\newtheorem{lemma}{Lemma}
\newtheorem{theorem}{Theorem}[section]
\providecommand{\keywords}[1]{\textbf{\textit{Keywords: }} #1}

% ---------------------------------------------------------------------- %
% Controle de Margens 
\headheight=21.06892pt
\addtolength{\textheight}{3.5cm} % 3 ou 4cm
\addtolength{\topmargin}{-2.5cm}
\setlength{\oddsidemargin}{-.4cm}
\setlength{\textwidth}{17.5cm}
%\addtolength{\textwidth}{3cm}

%\IfFileExists{upquote.sty}{\usepackage{upquote}}{}

\begin{document}
	\baselineskip = 5.7mm  % Espaço entre linhas;
	%\pagestyle{myheadings} % Estilo da página;
	
	%%Título
	\title{\textbf{Order book dynamics with liquidity fluctuations: \\
			limit theorems and large deviations}}
	
         \author[1]{Helder Rojas \thanks{hmolina@santander.com.br}}
          \affil[1]{\small{Santander Bank, S\~ao Paulo, Brazil}}
          
         \author[2]{Artem Logachov \thanks{omboldovskaya@mail.ru}}
          \affil[2]{\small{Sobolev Institute of Mathematics, Siberian Branch of the Russian Academy of Science, Russia}}

         \author[3]{Anatoly Yambartsev\thanks{yambar@ime.usp.br}}
       \affil[3]{\small{Institute of Mathematics and Statistics, USP, S\~ao Paulo, Brazil}}
	\date{} 

	\maketitle
	
%%--------------------------------------------------
	\begin{abstract}

We propose a class of  stochastic models for a dynamics of limit order book with different type of liquidities. Within this class of models we study the one where a spread decreases uniformly, belonging to the class of processes known as a population processes with uniform catastrophes. The law of large numbers (LLN), central limit theorem (CLT) and large deviations (LD) are proved for our model with uniform catastrophes. Our results allow us to satisfactorily explain the volatility and  local trends in the prices, relevant empirical characteristics that are observed in this type of markets. Furthermore, it shows us how these local trends and volatility are determined by the typical values of the bid-ask spread. In addition, we use our model to show how large deviations occur in the spread and prices, such as those observed in flash crashes.
\\
\\
\keywords{Limit order book, Liquidity fluctuations,  Markov chains, Limit theorems, Large Deviations, Flash crash.}
\end{abstract}

%%--------------------------------------------------
\section{Introduction}

The ``order book'' (OB) refers to an electronic list used to describe the evolution of bid and ask prices and sizes  in high-frequency electronic markets, such as NYSE-ARCA, LSE or NASDAQ. The evolution of the OB results from the interaction of buy and sell orders through a rather complex dynamic process. Order book dynamics has been extensively studied in the market microstructure and econophysics literature (\cite{biais1995empirical}, \cite{smith2003statistical}, \cite{bouchaud2009markets}), more recently, based on empirical characteristics presented in these studies, several models for the evolution of the OB have been proposed, see for instance \cite{cont2010stochastic}, \cite{avellaneda2011forecasting}, \cite{cont2013price}, \cite{cont2019stochastic}. These models, which are Markovian queueing systems, they generally implicitly assume uninterrupted high liquidity, i.e., they assume a abundant availability of limited orders in the OB. In this high liquidity context, the prices are relatively stable with small temporary fluctuations and the bid and ask sizes at the top of the OB provide valuable information on this short-term price fluctuations. Therefore,  these models are mainly focused on the direction of the next price movement and provide good results and a more or less clear understanding of the price dynamics in these conditions. On the other hand, in various markets the prices are not as stable, on the contrary, the prices show great changes, and even in some cases present local  down trends, mainly caused by liquidity wells in the OB. These events have caused controversy in the use of OB as the primary mechanism for trading,  in particular events such as the  May 6th, 2010 Flash Crash (see Figure \ref{FC}) have prompted market observers to question the stability of OB. Furthermore, the occurrence of Mini Flash Crash up and down, that is to say, rapid and significantly large directional movements in the price of assets, today are quite common, see \cite{golub2012high}. These events generate temporary liquidity crises resulting in larger spreads. Therefore, it is of both practical and theoretical interest to better understand how price dynamics depend on the structure and fundamental parameters of the OB.
\\
\begin{figure}[h!]
		\centering
	           \includegraphics[width=10.5cm,height=6.5cm]{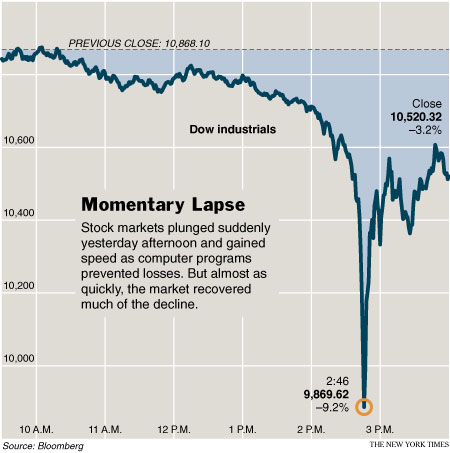}
		 \caption{A graph of the S$\&$P500 futures on the day of the Flash Crash of May 6th, 2010 at 2:45 pm.}
                      \label{FC}
	\end{figure}  
\\
In the present paper, we are interested in understanding how severe intermittencies in liquidity affect the order book dynamics. The contexts in which there are significant and intermittent decreases in the OB's ability to absorb market orders is what we call ``liquidity fluctuations''. Inspired in this context, we propose a simple model for price dynamics in an OB, our model explains in a simple way how large price fluctuations occur, fluctuations such as those observed in flash crashes. Our model successfully explains the local trends in the prices and the volatility around these trends expressed as function of parameters of the  micro-jumps of the prices. Furthermore, it shows us how these local trends and volatility are determined by the typical values of the bid-ask spread. From our price model, a model for the dynamics of the spread is implicitly derived, we use this model to analyze large deviations in the spread and its impact on prices, we present these large deviations in the form of ``optimal trajectories'' that give us relevant information about their occurrence. Finally, we present Monte Carlo simulations to corroborate that our model reproduces relevant empirical characteristics observed in our data as well as documented in the literature such as the famous bid-ask bounce, see \cite{roll1984simple}.

\subsection*{Motivation} 	
Our initial motivation arises from the price trends that are observed in various markets, those trends are local and eventually vary without any apparent pattern. Our interest was to understand the relationship between these observed long-term price trends with micro-jumps in short-term prices, see Figure \ref{short_long_prices}. The initial conjecture that motivated our work is the existence of a close relationship between the spread, price trend, volatility around these trends and the presence of liquidity fluctuations. Therefore, we needed a joint modeling of the spread and prices dynamics. In Figure \ref{short_long_prices}, we observe that the local trend is common to both the bid and ask prices. Therefore, this suggested to us that our model, in addition to presenting a long-term trend in prices, would have to incorporate the asymptotically stationary behavior of the spread. Usually large spread and price changes  are attributed to changes in liquidity (\cite{doyne2004really}). Our questions regarding the phenomenon grew and we set out to understand how large fluctuations in spread and prices occur, such as those observed in flash crashes, and how these rare events are related to liquidity fluctuations.
\\
\begin{figure}[h!]
		\centering
		\begin{subfigure}[b]{\linewidth}
			%\begin{subfigure}[h]{0.4\linewidth}
			\centering
	           	\includegraphics[width=0.8\linewidth]{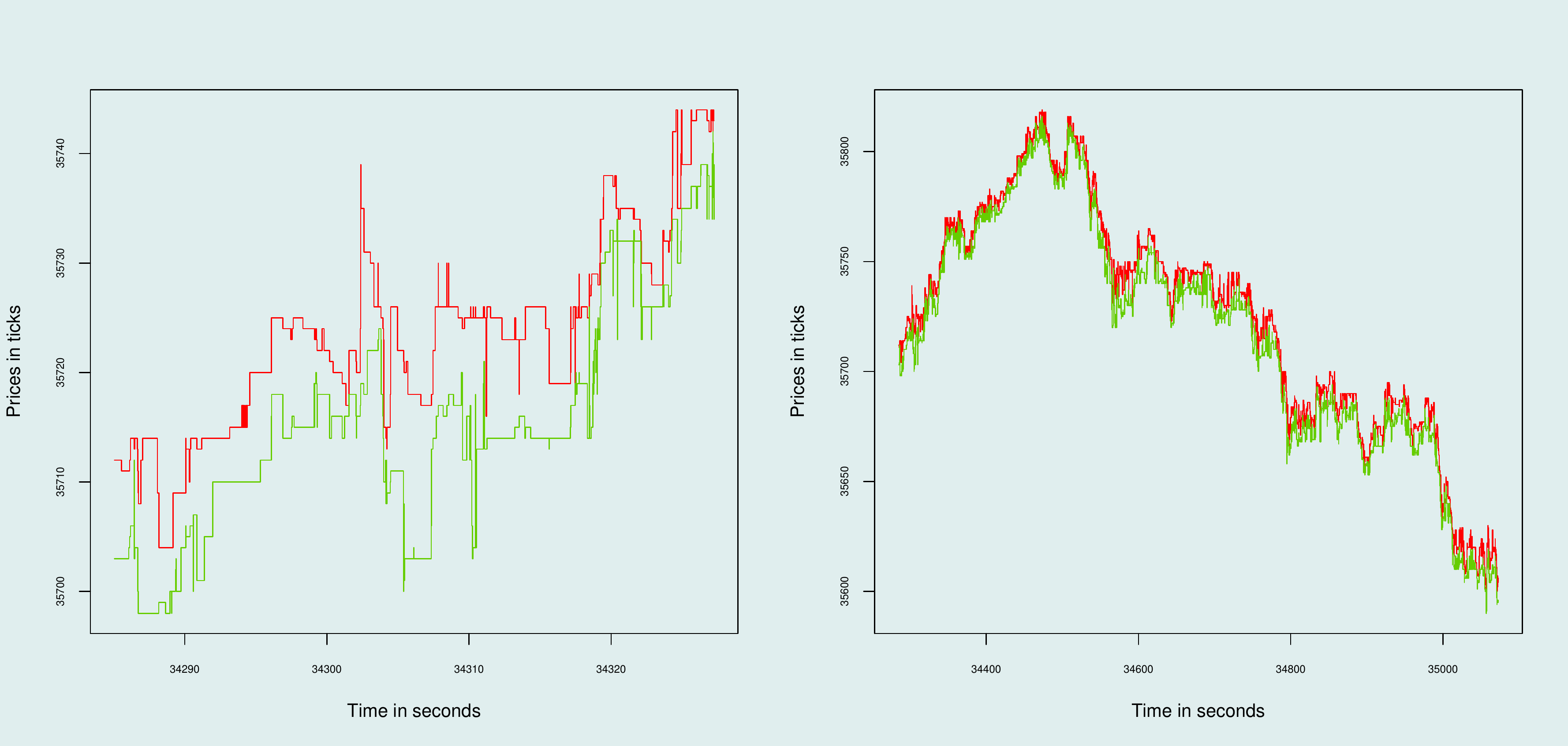}
		\end{subfigure} 
		\\
		\begin{subfigure}[b]{0.4\linewidth}
			\centering
			\includegraphics[width=\linewidth]{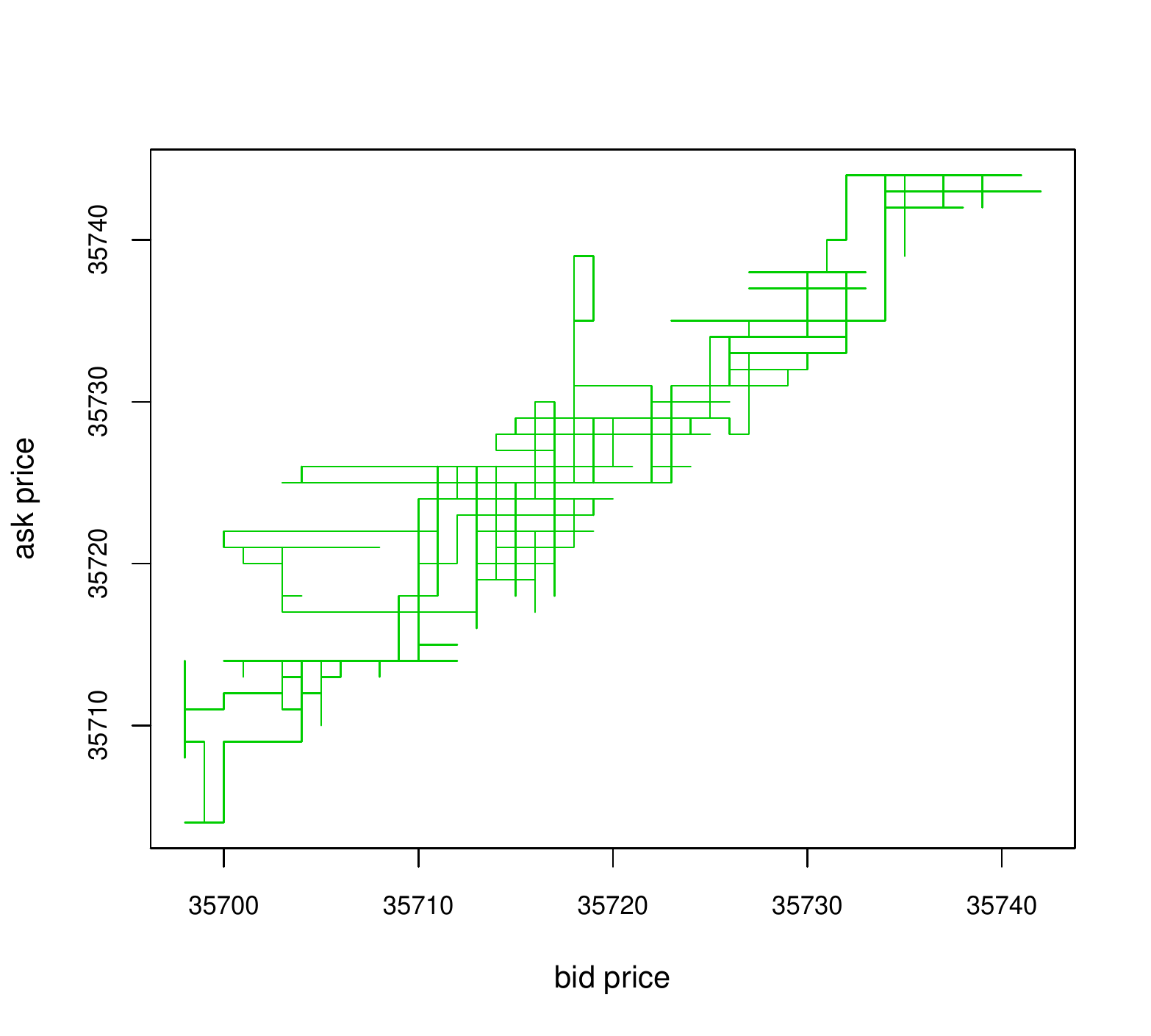} 
		\end{subfigure}
		\begin{subfigure}[b]{0.4\linewidth}
			\includegraphics[width=\linewidth]{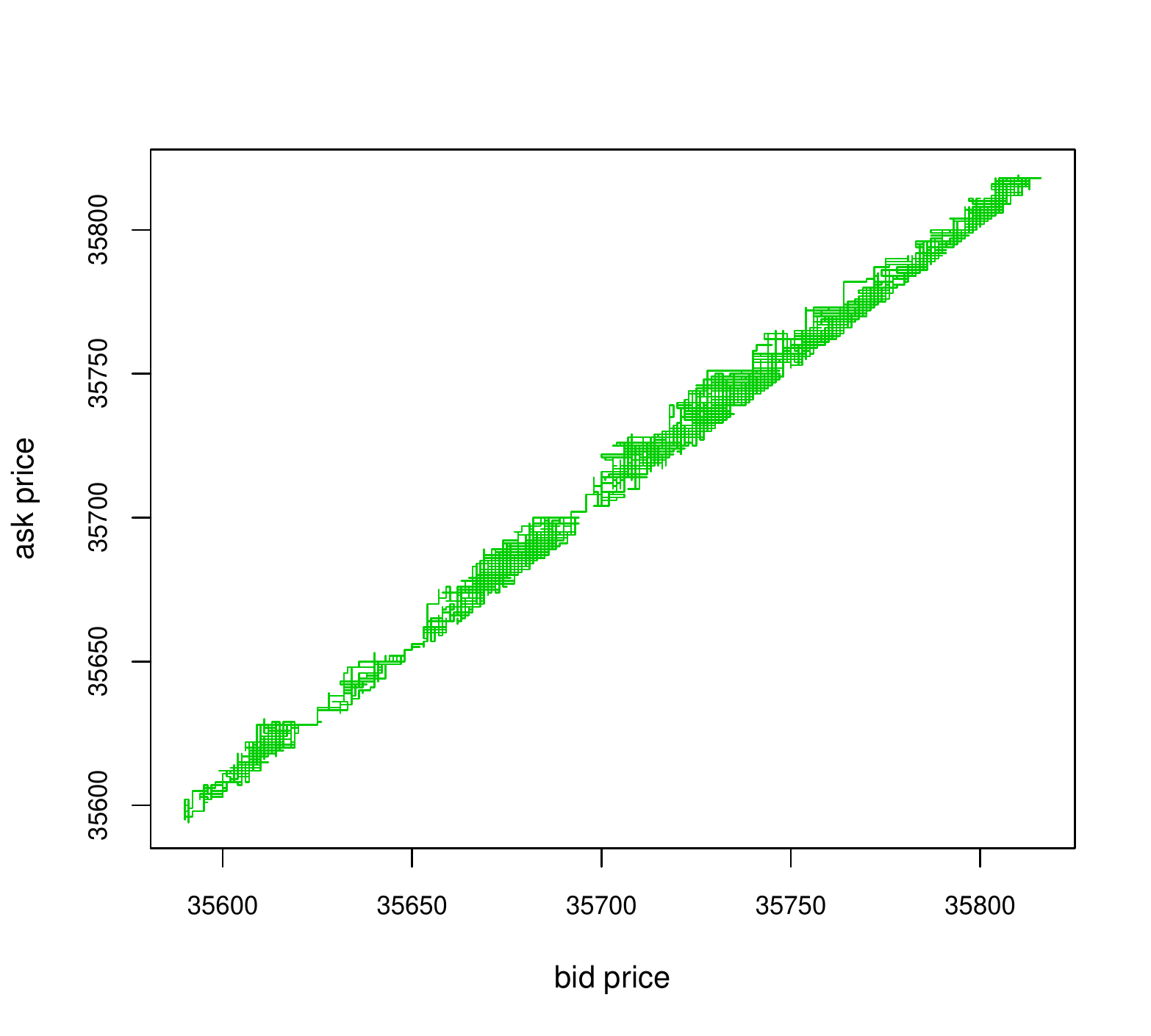}
		\end{subfigure}
		 \caption{Intraday evolution of the ask (red) and bid (green) price,
		 Apple Inc. (AAPL) stock, 04 March 2011. Left: Short-term, 1 minute. Right: Long-term, 15 minutes. }
         \label{short_long_prices}
	\end{figure}  
\\

%	\begin{figure}[h!]
%		\centering
%		\begin{subfigure}[b]{0.4\linewidth}
%			\includegraphics[width=\linewidth]{images/PricesShortRun.pdf} 
%		\end{subfigure}
%		\begin{subfigure}[b]{0.4\linewidth}
%			\includegraphics[width=\linewidth]{images/PricesLogRun.pdf}
%		\end{subfigure}
%    \caption{Intraday evolution of price process $X(t)=\big(P_{b}(t),P_{a}(t)\big)$,
%		 Apple Inc. (AAPL) stock, 04 March 2011. Left: Short-term, 1 minute. Right: Long-term, 15 minutes.}
%     \label{PricesSLR}
%	\end{figure}  

%		\centering
%		\begin{subfigure}[b]{0.35\linewidth}
%			\includegraphics[width=\linewidth]{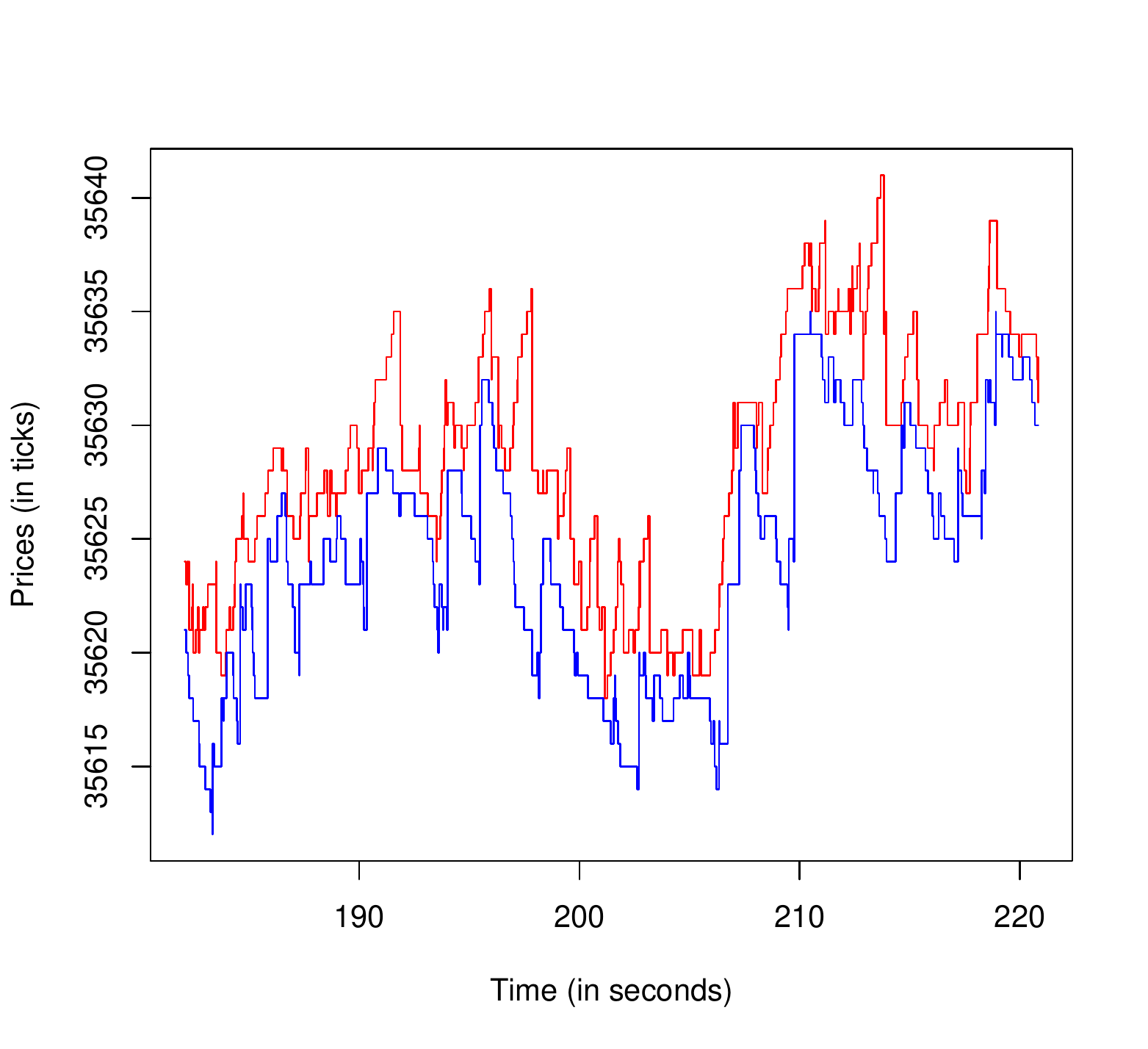} 
%		\end{subfigure}
%		\begin{subfigure}[b]{0.35\linewidth}
%			\includegraphics[width=\linewidth]{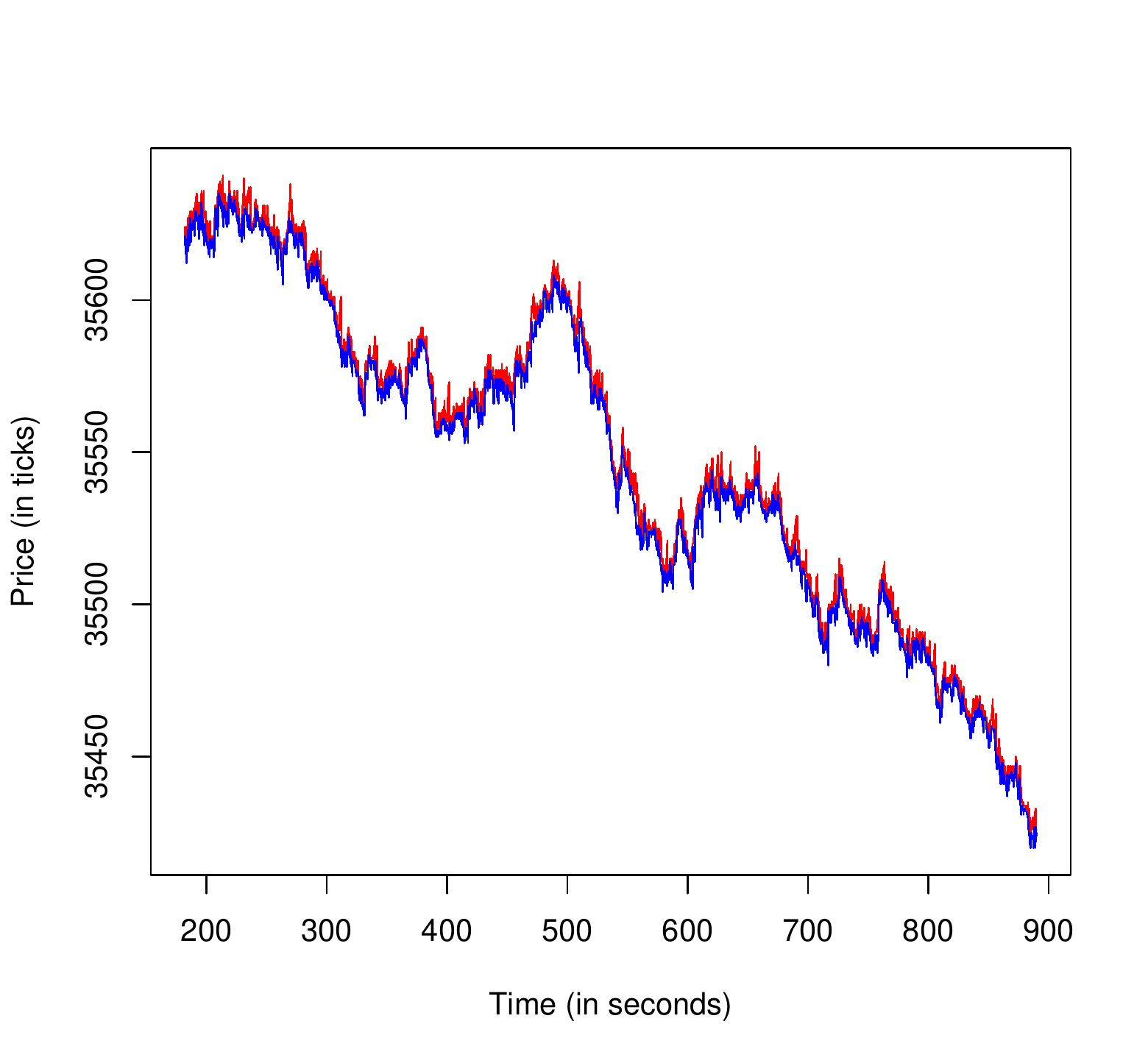}
%		\end{subfigure}

\subsection*{Outline} 
 Our paper is organized as follows. In Section \ref{OB_fluctuations} we describe some relevant empirical characteristics in a order book with liquidity fluctuations for which we consider three regimes based on the spread reversal process. In Section \ref{markov_model}  we present our general Markovian model for a order book. In Section \ref{regimen_principal} we present our results for a highly competitive regime. In Section \ref{other_regimens} we formulate the other two liquidity regimes that generalize the regime presented in the previous section.

\section{Markov model and regimes in liquidity fluctuations}\label{OB_fluctuations}

A very important empirical characteristic observed in markets with liquidity fluctuations is the low availability of orders in the OB, the queue sizes at the top of the OB are small most of the time, see e.g. Figure \ref{Q_bid_ask}. In this contexts, the queues sizes of the best bid and ask prices are no longer the determining factors in the dynamics of prices, for more details see \cite{doyne2004really}. If the liquidity intermittency is severe, even ``gaps" are formed in the OB (block of adjacent price levels that do not contain quotes). In these cases, the distribution of price changes is mainly determined by the distribution of the gap sizes in the OB. Taking these facts into account, if our interest is to explain the observed long-term price trends, we can focus only on micro-jumps in prices and disregard the size of the queues.
 \begin{figure}[h!]
		\centering
	           \includegraphics[scale=0.5]{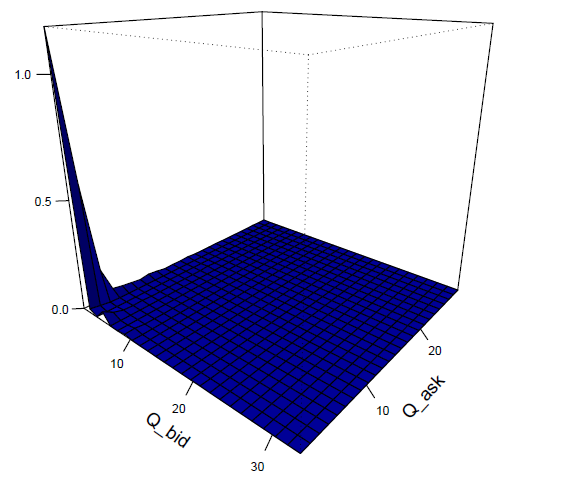}
		 \caption{Joint empirical distribution of bid and ask queue sizes at the top of the order book, Apple Inc. stock, 04 March 2011.}
                      \label{Q_bid_ask}
\end{figure}  

In these liquidity regimes, the spread exhibits a quite flexible dynamic behavior, reaching values much larger than those observed in high liquidity conditions, see e.g. Figure \ref{spread_flexibility}. Based on our empirical experiences, the OB slowly digests liquidity fluctuations and we can characterize that process in two stages. In a first stage, the spread begins to increase persistently. In the later stage, the spread is reduced, the reduction can be drastic or gradual. The closing type of the spread in the second stage depends on the intensity of the liquidity fluctuation.

 \begin{figure}[h!]
		\centering
	            \includegraphics[width=9.5cm,height=4.5cm]{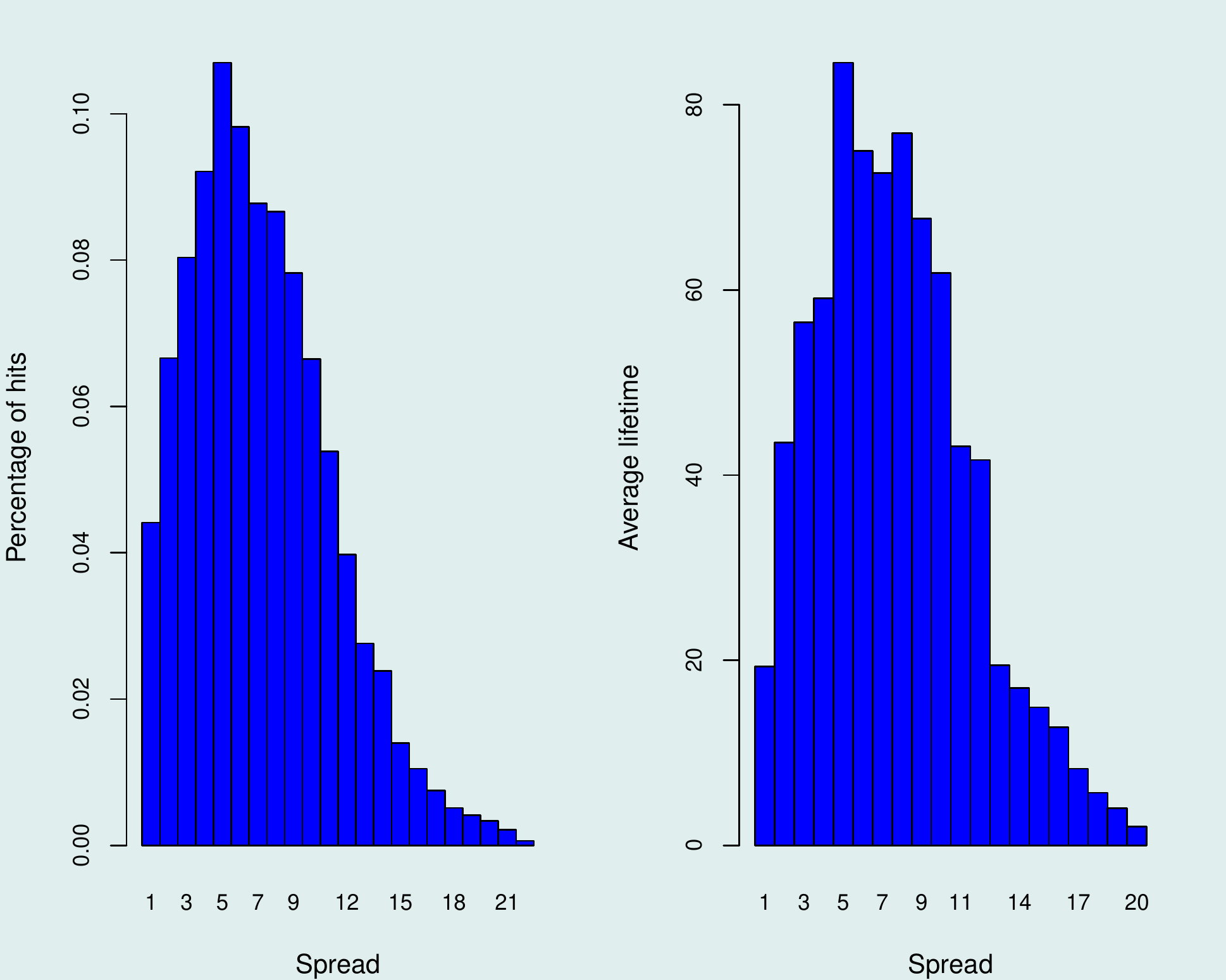} 
		 \caption{Empirical distribution of the bid-ask spread, Apple Inc. stock, 04 March 2011, corresponding to 15 minutes of observation. Left: Percentage of hit for each state. Right: Average lifetime of the spread, in seconds.}
                      \label{spread_flexibility}
	\end{figure}

Our empirical observations about the reversing process of  the bid-ask spread to its typical values, before and after liquidity shocks, have been theoretically corroborated through equilibrium models, see \cite{biais2009liquidity}. In this paper we consider different types of reversing process of the spread, i.e., we consider three low liquidity regimes: Highly competitive, Non-competitive and Low liquidity with gaps. These three regimes correspond to low liquidity regimes but differ in the closing type of the spread and in the gaps presence in the OB.
 
%%--------------------------------------------------

\subsection{The Markov model: a general view}\label{markov_model}

Based on our discussions in the previous sections, we propose a simplified representation for an OB with liquidity fluctuations. Let $P_{b}(t)$  the (best) bid price and $P_{a}(t)$ the (best) ask price, the state of the OB is described by a continuous-time process $X(t)=\big(P_{b}(t),P_{a}(t)\big)$ which takes values in the discrete state space $\mathbb{N}_{\tau}\times\mathbb{N}_{\tau}=\tau\mathbb{N}\times\tau\mathbb{N}$ (two-dimensional lattice), where $\tau$ is the ``tick size'' and as usual $\mathbb{N}$ is the set of positive integers $\mathbb{N} = \{ 1, 2, \dots \}$. For simplicity we assume $\mathbb{N}\times\mathbb{N}$ as the state space of $X(t)$ but we interpret each of its states as a multiple of $\tau$. The price process $X(t)$ presents piecewise constant sample paths whose transitions correspond to the order book events that cause price variations, see e.g.  Figure ~\ref{short_long_prices}. Our goal is to find asymptotic behaviour of the prices process $X(t)$ as the result of many micro-jumps.

%	\begin{figure}[h!]
%		\centering
%		\begin{subfigure}[b]{0.4\linewidth}
%			\includegraphics[width=\linewidth]{images/PricesShortRun.pdf} 
%		\end{subfigure}
%		\begin{subfigure}[b]{0.4\linewidth}
%			\includegraphics[width=\linewidth]{images/PricesLogRun.pdf}
%		\end{subfigure}
%    \caption{Intraday evolution of price process $X(t)=\big(P_{b}(t),P_{a}(t)\big)$,
%		 Apple Inc. (AAPL) stock, 04 March 2011. Left: Short-term, 1 minute. Right: Long-term, 15 minutes.}
%     \label{PricesSLR}
%	\end{figure}  

%%--------------------------------------------------
%\section{The Markov model}\label{markov_model}

Based on this simplified representation, consider a continuous-time Markov chain $X(t)=\big(P_{b}(t),P_{a}(t)\big)$ with state space $\mathbb{X}\subset \mathbb{N}\times\mathbb{N}$
$$
\mathbb{X} =\{ (b, a)\in \mathbb{N}\times\mathbb{N}:\ \mbox{ such that } b < a \}.
$$
Here $P_{b}(t)$ represents the bid price, $P_{a}(t)$ represents the ask price, and $S(t)=P_{a}(t)-P_{b}(t)$ is the bid-ask spread. The transitions of the chain $X(t)$ defined by the transition rates: let $(b,a)$ be a state of Markov chain, then
\begin{equation}\label{model}
\begin{aligned}
& (b,a) \to  (b,a+\Delta)\hspace{0.5cm} \textrm{with rate} \hspace{0.5cm} \alpha_{+}(\Delta), \\
& (b,a) \to  (b,a-\Delta)\hspace{0.5cm} \textrm{with rate} \hspace{0.5cm} \alpha_{-}(\Delta), \hspace{0.4cm} \textrm{where } 0 <\Delta < b-a,\\
& (b,a) \to  (b-\Delta,a)\hspace{0.5cm} \textrm{with rate} \hspace{0.5cm} \beta_{-}(\Delta), \\
& (b,a) \to  (b+\Delta,a)\hspace{0.5cm} \textrm{with rate} \hspace{0.5cm} \beta_{+}(\Delta), \hspace{0.4cm} \textrm{where } 0 <\Delta < b-a,
\end{aligned}
\end{equation}
in all cases, increment $\Delta$ is a positive integer number. The function $\alpha_{+}(\cdot)$ (resp. $\beta_{-}(\cdot)$) is the rate at which increases (resp. decreases) in the ask (resp. bid) price occur as a result of the execution of market buy (resp. sell) orders or cancellations of limited sell (resp. buy) orders, as well as, that the functions $\alpha_{-}(\cdot)$ (resp. $\beta_{+}(\cdot)$) is the rate at which the decreases (resp. increases) in the ask (resp. bid) price occur as a result of a limited sell (resp. buy) order placed within the spread.

We study the asymptotic behavior of $X(t)$ as $t$ goes to infinity. In order to do this, it is convenient consider an equivalent process  $Y(t)=\big(P_{b}(t),S(t)\big)$ with state space $\mathbb{Y}=\mathbb{N} \times \mathbb{N}$.
Although $X(t)$ and $Y(t)$ contain the same information, the second representation gives us greater control in the asymptotic analysis. The transitions of the chain $Y(t)$ defined by the transition rates: let $(b,s)$ be a state of Markov chain, then
\begin{equation}\label{model1}
\begin{aligned}
& (b,s) \to  (b,s+\Delta) &\textrm{with rate} \hspace{0.5cm}& \alpha_{+}(\Delta), \\
& (b,s) \to  (b,s-\Delta) &\textrm{with rate} \hspace{0.5cm}& \alpha_{-}(\Delta), \\
& (b,s) \to  (b-\Delta,s+\Delta) &\textrm{with rate} \hspace{0.5cm}& \beta_{-}(\Delta), \\
& (b,s) \to  (b+\Delta,s-\Delta) &\textrm{with rate} \hspace{0.5cm}& \beta_{+}(\Delta).
\end{aligned}
\end{equation}
Since the transition rates of $Y(t)$ depend only on the second coordinate, the spread, thus we see that $S(t)$ alone is the continuous-time Markov process and has the following transition rates: suppose that at some moment the spread is $k \in \mathbb{N}$, then 
\begin{equation}\label{modelS}
\begin{aligned}
k & \to k+\Delta & \mbox{ with rate } & \ \ \gamma_+(\Delta) := \alpha_+(\Delta) + \beta_-(\Delta),\\ 
k & \to k-\Delta & \mbox{ with rate } & \ \ \gamma_-(\Delta) := \alpha_-(\Delta) + \beta_+(\Delta).
\end{aligned}
\end{equation}

Based on the model (\ref{model}) and their alternative representation (\ref{model1}) and (\ref{modelS}), we consider the three low liquidity regimes: highly competitive, non-competitive and low liquidity with gaps. Any regime is defined by the how the rates depend on increment $\Delta$ which are usually determined by the intensity of liquidity fluctuations. In this paper we focus on the first regime, highly competitive regime, the other two regimes are succinctly formulated. The results presented here can be generalized for the other two regimes, but we believe that qualitatively there will be no significant difference in the results.

%%--------------------------------------------------
\section{The Markov model: closing the spread uniformly (highly competitive regime)}\label{regimen_principal}

The highly competitive regime (HC regime) characterized by very small opening steps of the spread and a very rapid decrease in it. This regime is consistent with a rapid reversing process of the spread and the absence of gaps in the OB. The very rapid decreasing of spread is caused by the competitive behavior of impatient agents that place quotes within the spread, prioritizing the negotiations of their placed limited orders. Thus, in the considered model (\ref{model}) we define the rates in such a way that the spread can increase only by one unit and for a given length of the spread, say $k$, the next length of the spread is chosen uniformly from the set $I_{k}=\{1, \dots, k-1\}$.

In order to define the rates for highly competitive regime, we abuse our notations and we fix the parameters $\alpha_{+}, \alpha_{-}, \beta_{+}, \beta_{-}$, which are strictly positive real numbers. Further, the notations $\alpha_\pm, \beta_\pm$ are used only as a parameters of the model and not as the functions. Define the transition rates for the Markov chain $X(t)$ in the following way: suppose that at some moment the chain is at some state $(b,a)\in \mathbb{X}$, then 
\begin{equation}\label{HLmodel}
\begin{split}
\alpha_+(\Delta) = \left\{ \begin{array}{rl} \alpha_{+}, & \mbox{ if }\Delta =1;\\ 0, & \mbox{ otherwise}; \end{array}\right.
\ \ 
\alpha_-(\Delta) = \left\{ \begin{array}{rl} \frac{\alpha_{-}}{a-b-1}, & \mbox{ if } b-a >1\mbox{ for any }\Delta \in I_{a-b}; \\ 0, & \mbox{ otherwise}; \end{array}\right.
\\ 
\beta_-(\Delta) = \left\{ \begin{array}{rl} \beta_{-}, & \mbox{ if }\Delta =1;\\ 0, & \mbox{ otherwise}; \end{array}\right.
\ \
\beta_+(\Delta) = \left\{ \begin{array}{rl} \frac{\beta_{+}}{a-b-1}, & \mbox{ if } b-a >1\mbox{ for any }\Delta \in I_{a-b}; \\ 0, & \mbox{ otherwise}. \end{array}\right.
\end{split}
\end{equation}
For illustration see Figure~\ref{HLCmodelFig} in the case when $a-b = 3$.

	\begin{figure}[h!]
		\centering
	           \includegraphics[scale=0.6]{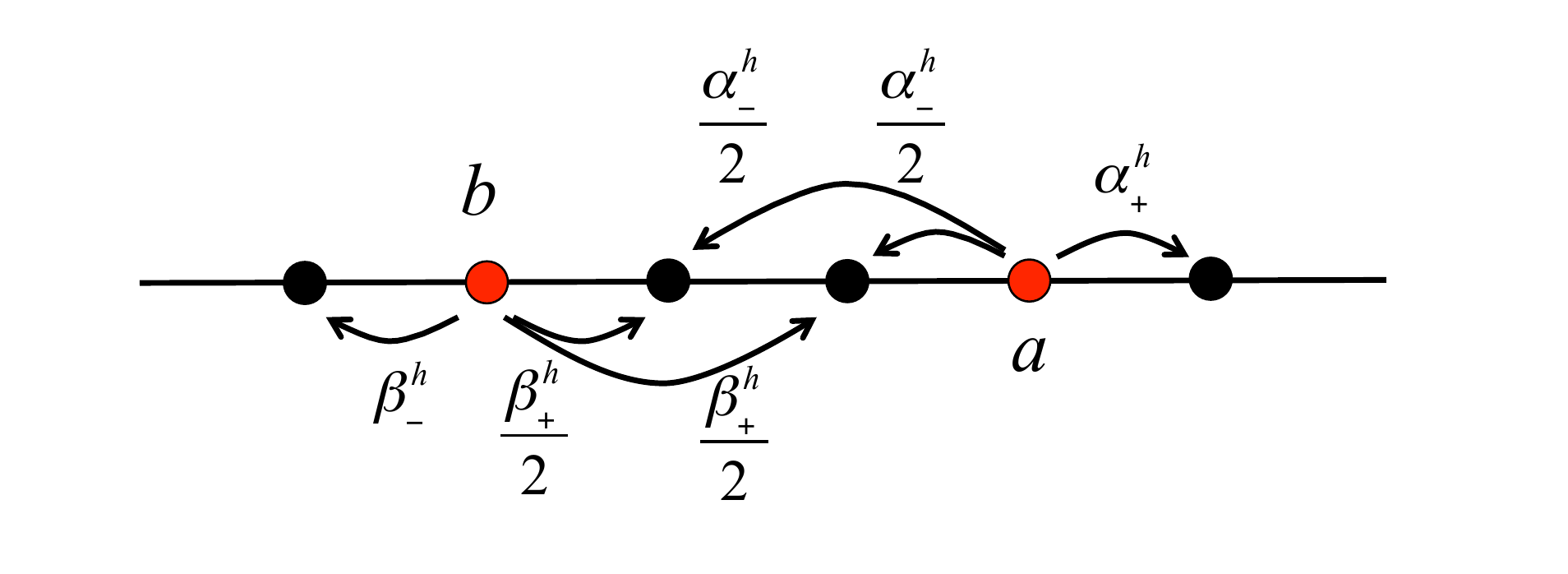} 
		 \caption{The rates for Highly Competitive model. Illustrative example for the case when $a-b = 3$.}
                      \label{HLCmodelFig}
	\end{figure}  

In this regime the transition rates of $S(t)$, see (\ref{modelS}), are the following: suppose that at some moment the spread is $k \in \mathbb{N}$, and let $\gamma_+:=\beta_{-}+\alpha_{+}$ and $\gamma_-:=\beta_+ + \alpha_{-}$, then 

\begin{equation}\label{HLmodelS}
\begin{aligned}
k & \to k+1 & \mbox{ with rate } & \ \ \gamma_+,\\ 
k & \to k-\Delta & \mbox{ with rate } & \ \ \frac{\gamma_-}{k-1}  \textrm{ for }  \Delta \in I_{k}.
\end{aligned}
\end{equation}
Note that $S(t)$ is irreducible Markov chain in this regime. 

\subsection{Ergodicity and invariant measure for $S(t)$} 

We start with $S(t)$. The next theorem provide ergodicity and, it is one of the rare cases when we can find the invariant measure for the process.

\begin{theorem}
In highly competitive regime model, for any positive values of parameters $\alpha_{+}, \alpha_{-}, \beta_{+}, \beta_{-}$ the spread $S(t)$ is a positive recurrent Markov process with invariant measure $\mu = \left(\mu(k), k\in \mathbb{N}\right)$ is the following: let $\gamma := \gamma_+ + \gamma_-$
\begin{equation}\label{inv}
\mu(k) = \frac{k!(\gamma_+)^{k-1}}{ \prod_{i=1}^{k-1} (\gamma_- + i \gamma) } \left( 1 + \sum_{k\geq 2} \frac{k!(\gamma_+)^{k-1}}{ \prod_{i=1}^{k-1} (\gamma_- + i \gamma) }\right)^{-1}.
\end{equation}
\end{theorem}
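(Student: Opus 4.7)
The chain is irreducible and non-explosive because the total exit rate from every state is at most $\gamma_{+}+\gamma_{-}$, so it suffices to exhibit a probability solution of the global balance equations: such a solution is automatically the unique invariant law and forces positive recurrence. The plan is to exploit the one-sided nature of the dynamics (upward jumps are always of size $1$) by writing a \emph{cut balance} equation across $\{1,\dots,k\}$ versus $\{k+1,k+2,\dots\}$, which is much cleaner than the local balance at a single state.

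First, I observe that the only way to cross the cut upward is $k\to k+1$ at rate $\gamma_{+}$, so the upward flux is $\mu(k)\gamma_{+}$. For the downward flux, from state $j>k$ the jump $j\to j-\Delta$ lands in $\{1,\dots,k\}$ exactly when $\Delta\in\{j-k,\dots,j-1\}$, giving $k$ admissible values of $\Delta$, each at rate $\gamma_{-}/(j-1)$. Thus the cut balance reads
\begin{equation*}
\mu(k)\,\gamma_{+}\;=\;k\gamma_{-}\sum_{j\geq k+1}\frac{\mu(j)}{j-1},\qquad k\geq 1.
\end{equation*}
Writing the same identity at $k+1$ and subtracting (after dividing by the appropriate factor) eliminates the tail sum and yields the simple one-step recursion
\begin{equation*}
\mu(k+1)\bigl[(k+1)\gamma_{-}+k\gamma_{+}\bigr]\;=\;(k+1)\gamma_{+}\,\mu(k),
\end{equation*}
equivalently $\mu(k+1)/\mu(k)=(k+1)\gamma_{+}/(\gamma_{-}+k\gamma)$. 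Iterating from $k=1$, using $\prod_{j=1}^{k-1}(j+1)=k!$, gives $\mu(k)=\mu(1)\,k!\,\gamma_{+}^{k-1}/\prod_{i=1}^{k-1}(\gamma_{-}+i\gamma)$, matching the claimed shape.

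It remains to normalise, which also certifies positive recurrence. Since $\mu(k+1)/\mu(k)\to \gamma_{+}/\gamma<1$ as $k\to\infty$, the ratio test gives $Z:=\sum_{k\geq 1}k!\,\gamma_{+}^{k-1}/\prod_{i=1}^{k-1}(\gamma_{-}+i\gamma)<\infty$. Setting $\mu(1)=1/Z$ produces the formula in the statement. Because the chain is irreducible and non-explosive and admits the probability invariant measure $\mu$, it is positive recurrent, concluding the proof.

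The only step with any subtlety is the passage from the cut balance to the one-step recursion: one must notice that, thanks to upward jumps being of unit size, the telescoping of the tail sum is exact, which is what makes an explicit product formula possible here (and is not typical for general birth–catastrophe chains). Once that reduction is made, the remaining work is a direct iteration and a ratio-test estimate.
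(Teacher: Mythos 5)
Your proposal is correct, and it reaches the formula by a genuinely different route than the paper. The paper handles the two claims separately: positive recurrence is established first via a Foster--Lyapunov drift criterion (the function $f(k)=k$ gives $\Gamma f(k)=\gamma_+-\tfrac{\gamma_- k}{2}$, invoking Theorem 1.4 of Menshikov et al.), and the measure is then verified against the \emph{full} system of global balance equations, where subtracting consecutive station equations produces a second-order recursion $(2\gamma_++\gamma_-)\mu(n)=\gamma_+\mu(n-1)+\bigl(\gamma_++\gamma_-+\tfrac{\gamma_-}{n}\bigr)\mu(n+1)$ that is solved by iteration. Your cut balance across $\{1,\dots,k\}$ exploits the skip-free-upward structure more directly: it yields a first-order recursion at once, and positive recurrence comes for free, since an irreducible, non-explosive chain (here rates are uniformly bounded by $\gamma$) admitting a probability solution of the balance equations is positive recurrent --- no Lyapunov function needed. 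Your algebra checks out: $(k+1)\gamma_-+k\gamma_+=\gamma_-+k\gamma$, so $\mu(k+1)/\mu(k)=(k+1)\gamma_+/(\gamma_-+k\gamma)$, which iterates to \eqref{inv}, and the ratio test applies since this tends to $\gamma_+/\gamma<1$ (a geometric tail the paper re-derives later, in embedded-chain form, for its CLT argument). The one step you should spell out is the converse direction of your subtraction: deriving the recursion \emph{from} the cut balance shows necessity, whereas your plan requires exhibiting a solution, so you must confirm that the product-formula $\mu$ actually satisfies the cut identities. This is precisely your ``exact telescoping'': setting $a_k:=\gamma_+\mu(k)/(k\gamma_-)$, the recursion reads $a_k-a_{k+1}=\mu(k+1)/k$, and summing over indices $\ge k$ (legitimate because $a_k\to 0$ geometrically) gives $a_k=\sum_{j\ge k+1}\mu(j)/(j-1)$, i.e.\ the cut balance; and cut balance for every $k$ is equivalent to global balance, as one sees by summing the station-wise balance over $\{1,\dots,k\}$, all sums converging absolutely since exit rates are bounded. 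With that converse made explicit, your proof is complete and arguably cleaner than the paper's.
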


\begin{proof} 
In order to prove the positive recurrence we provide the Lyapunov function following the criteria for continuous-time Markov chains from \cite{menshikov2014explosion}, Theorem 1.4. The positive recurrence is equivalent to the existence of a non-negative function $f$ (Lyapunov function) on the set of states, some small positive $\varepsilon$ and a finte set of states $F$, such that applying the generator of the process $\Gamma$ to $f$ we obtain $\Gamma f(x) < - \varepsilon,$ for all $x\notin F$.

Recall, that the generator $\Gamma$ is the matrix $\Gamma=(\Gamma_{xy})$ where $\Gamma_{xy}, x\ne y$ are transition rates and $\Gamma_{xx} = - \sum_{y\ne x} \Gamma_{xy}$. Let $f(x) = x$. Applying the criteria:
$$
\Gamma f(k) = \gamma_+(k+1) + \sum_{x\in I_k} \frac{\gamma_+}{k-1} x - (\gamma_++\gamma_-)k = \gamma_+ - \frac{\gamma_- k}{2} < - 1,
$$
for all $k: k > 2(\gamma_++1)/\gamma_-$. The last inequality provides the finite set $F=\{ k \le 2(\gamma_++1)/\gamma_-\}$.

The formula for invariant measure can be checked directly by the global balance equations. Indeed, starting with system of the global balance equation, we obtain following.
$$
\left\{ 
\begin{aligned}
\gamma_+ \mu (1) & =  \gamma_-\mu(2) + \frac{\gamma_-}{2}\mu(3) + \frac{\gamma_-}{3}\mu(4) + \dots \\
(\gamma_+ + \gamma_-) \mu (2) & = \gamma_+ \mu(1) + \frac{\gamma_-}{2}\mu(3) + \frac{\gamma_-}{3}\mu(4) + \dots \\
(\gamma_+ + \gamma_-) \mu (3) & = \gamma_+ \mu(2) + \frac{\gamma_-}{3}\mu(4) + \frac{\gamma_-}{4}\mu(5) + \dots \\
\dots  \\
(\gamma_+ + \gamma_-) \mu (n) & = \gamma_+ \mu(n-1) + \frac{\gamma_-}{n}\mu(n+1) %+ \frac{\gamma_-}{n+1}\mu(n+2) 
+ \dots \\
\dots  \\
\end{aligned}
\right.
\Leftrightarrow
\left\{ 
\begin{aligned}
2 \gamma_+\mu (1) & = (\gamma_++2\gamma_-)\mu(2) \\
(2\gamma_+ + \gamma_-)\mu (2) & = \gamma_+\mu(1) + (\gamma_++\gamma_-+\frac{\gamma_-}{2})\mu(3) \\
(2\gamma_+ + \gamma_-)\mu (3) & = \gamma_+\mu(2) + (\gamma_++\gamma_-+\frac{\gamma_-}{3})\mu(4) \\
\dots \\
(2\gamma_+ + \gamma_-)\mu (n) & = \gamma_+\mu(n-1) + (\gamma_++\gamma_-+\frac{\gamma_-}{n})\mu(n+1) \\
\dots \\
\end{aligned}
\right.
$$
Dividing left side and right side on $\gamma_++\gamma_-$ and using the notation $p:=\frac{\gamma_+}{\gamma_++\gamma_-}, q:=1-p$ we rewrite the last system as
$$
\left\{ 
\begin{aligned}
\mu (2) & = \frac{2p}{1+q} \mu(1) \\
\mu (3) & = \frac{1+p}{1+\frac{q}{2}} \mu(2) - \frac{p}{1+\frac{q}{2}} \mu(1) \\
\dots \\
\mu (n) & = \frac{1+p}{1+\frac{q}{n-1}} \mu(n-1) - \frac{p}{1+\frac{q}{n-1}} \mu(n-2) \\
\dots \\
\end{aligned}
\right.
\Leftrightarrow
\left\{ 
\begin{aligned}
\mu (2) & = \frac{2p}{1+q} \mu(1) \\
\mu (3) & = \frac{3p^2}{(1+q)(1+\frac{q}{2})} \mu(1) \\
\dots \\
\mu (n) & = \frac{np^{n-1}}{ \prod_{i=1}^{n-1} (1+\frac{q}{i})} \mu(1) \\
\dots \\
\end{aligned}
\right.
$$
After the normalizing of the relation
\begin{equation}\label{inv1}
\mu (n) = \frac{np^{n-1}}{ \prod_{i=1}^{n-1} (1+\frac{q}{i})} \mu(1)
\end{equation}
and remembering that $\mu$ is probability measure, $\sum \mu(k) =1$, we return the notations $\gamma_\pm$ and obtain the formula (\ref{inv}).
\end{proof}

The process $S(t)$ belongs to the class of processes known as the population processes with uniform catastrophes. 
%In \cite{logachov2019large} the following limiting property of the maximum of the spread was proved: 
%$$
%\mathbb{P} \left( \lim_{T\to\infty} \sup_{t\in [0,1]} \frac{S(tT)}{T} > \varepsilon \right) = 0.
%$$
An extension for the processes with \textit{almost uniform} catastrophes (see Section~\ref{almostUC} for definition) was considered in \cite{logachov2018local}, where the following result was proved for the maximum of the process: for any fixed $b\in (0,1)$
$$
\mathbb{P} \left( \lim_{T\to\infty} \sup_{t\in [0,1]} \frac{S(tT)}{T^b} > \varepsilon \right) = 0.
$$

These results of the stationary asymptotic behavior of spread process $S(t)$ are consistent with the empirical observations presented in Figures \ref{short_long_prices} and \ref{spread_flexibility}.

%%--------------------------------------------------
%\noindent
\subsection{Local drift (LLN for the prices)}
The next theorem is about the Law of Large Numbers (LLN) for prices. This theorem will shed light on local trends (local drift) of the prices. 

\begin{theorem}\label{LLNh}
With probability one
\begin{equation}
\frac{P_{b}(t)}{t} \to D \hspace{0.5cm} \textrm{a.s.} \hspace{0.5cm} t \to \infty
\end{equation}
where
$$
 D = - \beta_- - \mu(1)\gamma\Bigl( \frac{\beta_-\gamma_-}{\gamma_+} + \frac{\beta_+}{2} \Bigr) +  \frac{\beta_+\gamma}{2} \sum_{k\ge 1} k\mu(k).%\mathbb{E}_{\mu}S(t).
$$
\end{theorem}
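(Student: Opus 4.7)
The plan is to use the standard semimartingale decomposition of $P_b(t)$ and reduce the LLN to the ergodic theorem for the spread $S(t)$, which is positive recurrent with invariant measure $\mu$ in (\ref{inv}) by the previous theorem. Applying the generator of $Y(t) = (P_b(t), S(t))$ to the coordinate function $f(b,s) = b$, only the $\beta_\pm$-transitions in (\ref{HLmodel}) contribute, and one obtains the drift
\[
d_b(s) = -\beta_- + \mathbf{1}_{\{s\geq 2\}}\sum_{\Delta=1}^{s-1}\Delta \cdot \frac{\beta_+}{s-1} = -\beta_- + \mathbf{1}_{\{s\geq 2\}}\frac{\beta_+ s}{2}.
\]
Dynkin's formula then yields the decomposition $P_b(t) = P_b(0) + M(t) + \int_0^t d_b(S(u))\,du$, with $M$ a local martingale whose jumps are those of $P_b$ compensated by $d_b(S)$.

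From (\ref{inv}) the ratio $\mu(k+1)/\mu(k)$ converges to $p = \gamma_+/\gamma < 1$, so $\mu$ has geometric tails and therefore finite moments of every order. In particular $|d_b(s)| = O(s)$ and the variance-rate $v_b(s) := \beta_- + \mathbf{1}_{\{s\geq 2\}}\beta_+ s(2s-1)/6 = O(s^2)$ are both $\mu$-integrable. The ergodic theorem for positive recurrent continuous-time Markov chains then gives
\[
\frac{1}{t}\int_0^t d_b(S(u))\,du \to \int d_b\,d\mu, \qquad \frac{\langle M\rangle_t}{t} = \frac{1}{t}\int_0^t v_b(S(u))\,du \to \int v_b\,d\mu \text{ a.s.}
\]
Since $\langle M\rangle_t$ grows linearly a.s., the strong law for square-integrable local martingales (equivalently, Kronecker's lemma applied to $M(t)/\langle M\rangle_t$) yields $M(t)/t \to 0$ a.s., so that $P_b(t)/t \to \int d_b\,d\mu$ a.s.

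It remains to identify $\int d_b\,d\mu = -\beta_- + \tfrac{\beta_+}{2}\bigl(\sum_{k\geq 1} k\mu(k) - \mu(1)\bigr)$ with the specific expression for $D$ stated in the theorem. The reduction uses the balance identity at state $1$, namely $\gamma_+ \mu(1) = \gamma_- \sum_{j\geq 2}\mu(j)/(j-1)$, together with the recursion (\ref{inv1}) for $\mu$, to surface the factors $\gamma$ and $\gamma_-/\gamma_+$ appearing in $D$. I expect this algebraic reduction to be where the main technical fiddling lies; the probabilistic ingredients, that is the ergodic average of $d_b$ and the martingale SLLN for $M$, are routine consequences of the positive recurrence and the explicit form of $\mu$ from the preceding theorem.
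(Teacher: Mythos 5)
Your route is genuinely different from the paper's, and its probabilistic content is sound. The paper works entirely through the embedded discrete-time chain: it writes the bid increments as $p_n=\sum_{i} F(s_{i-1},s_i,U_i)$, applies the ergodic theorem to $\hat s_n=(s_{n-1},s_n,U_n)$ to get $p_n/n\to v$, and then passes to continuous time by the time change $P_b(t)/t=(p_{N_t}/N_t)(N_t/t)$. You instead stay in continuous time: Dynkin's formula for $f(b,s)=b$, the ergodic theorem for an additive functional of the positive recurrent spread, and a martingale SLLN. Your drift $d_b(s)=-\beta_-+\mathbf{1}_{\{s\ge 2\}}\beta_+ s/2$ and carr\'e du champ $v_b(s)=\beta_-+\mathbf{1}_{\{s\ge2\}}\beta_+s(2s-1)/6$ are both correct, and the geometric tail of $\mu$ gives the integrability you need; the only detail you gloss over is a localization argument (Dynkin's formula for the unbounded $f$), which is routine. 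What your approach buys is that it avoids the time change and the bookkeeping between the continuous-time measure $\mu$ and the embedded-chain measure $\pi$ entirely — and that bookkeeping is precisely where the paper stumbles.

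The genuine problem is the step you defer to the end: the identification of $\int d_b\,d\mu=-\beta_-+\frac{\beta_+}{2}\bigl(\sum_{k\ge1}k\mu(k)-\mu(1)\bigr)$ with the printed $D$ cannot be carried out, because the two are not equal — your constant is the correct one and the theorem's display is in error, so no balance identity will reconcile them. The printed $D$ results from two slips at the end of the paper's proof: (i) $N_t$ is not Poisson of rate $\gamma$, since the exit rate of the spread from state $1$ is $q(1)=\gamma_+$ (no catastrophes from $1$) while $q(k)=\gamma$ for $k\ge2$, so the long-run jump rate is $\bar q=\sum_k q(k)\mu(k)=\gamma-\gamma_-\mu(1)$; and (ii) the final substitution treats $\pi$ as $\gamma\mu$, whereas the correct relation is $\pi(k)=q(k)\mu(k)/\bar q$, in particular $\pi(1)=\gamma_+\mu(1)/\bar q$. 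Redoing the paper's last display with these corrections, $v\bar q$ collapses exactly to your $\int d_b\,d\mu$. Sanity checks: the printed $D$ is dimensionally inhomogeneous ($\beta_+\gamma\sum_k k\mu(k)$ has the units of a squared rate, $\beta_-$ of a rate); and when $\alpha_+=\beta_-$, $\alpha_-=\beta_+$ the bid--ask dynamics is exchangeable, the midprice has zero drift and $S(t)/t\to0$, so the true $D$ vanishes — your formula then gives $0$, while the printed one evaluates to roughly $0.11$ for $\alpha_\pm=\beta_\pm=1/2$. You can also simplify your answer further: stationarity applied to $f(k)=k$ (the very computation in the paper's Lyapunov step, $\Gamma f(k)=\gamma_+-\gamma_-k/2$ for $k\ge2$, with $\mathbb{E}_\mu\Gamma f=0$ justified by the geometric tails) yields $\sum_{k\ge2}k\mu(k)=2\gamma_+/\gamma_-$, whence
\begin{equation*}
D=-\beta_-+\frac{\beta_+}{2}\sum_{k\ge2}k\mu(k)=\frac{\beta_+\gamma_+}{\gamma_-}-\beta_-,
\end{equation*}
a closed form that neither the paper nor your writeup records. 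So: prove the theorem with your constant, and do not spend effort trying to massage it into the printed one.
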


\begin{proof}
Probably, the most simple way to prove the LLN is due the ergodic theorem for discrete-time Markov chains. Indeed, let $s_n$ be the imbedded (of $S(\cdot)$ discrete-time Markov chain on $\mathbb{N}$ with following transition probabilities: 
$$
p(k,l):=\mathbb{P}(s_{n+1} = l \mid s_n = k) = \left\{ \begin{aligned}
1, & \mbox{ if } l = 2 \mbox{ and } k = 1, \\
\frac{\gamma_+}{\gamma_++\gamma_-}, & \mbox{ if }  l= k+1, \mbox{ when } k > 1, \\
\frac{\gamma_-}{\gamma_++\gamma_-} \frac{1}{k-1}, & \mbox{ if } l \in I_k \mbox{ and } k > 1. 
\end{aligned} \right.
$$
%From the definition of the original model (): 
%$$
%\gamma_+ = \alpha_+ + \beta_- \mbox{ and } \gamma_- =  \alpha_- + \beta_+.
%$$
Let $\pi=\big(\pi(s), s\in \mathbb{N})$ be the stationary measure for the chain $s_n$. 
%Both stationary measures are related as follows: $\mu(s)=\frac{\gamma_+}{\gamma_++\gamma_-} \pi(s)$, for all $s\in \mathbb{N}^{*}$, see \cite{norris1998markov}. 
It is easy to check that the relation (\ref{inv}) for the stationary measure $\mu$ will be transformed into the relation 
\begin{equation}\label{inv1}
\pi (n) = \frac{np^{n-2}}{ \prod_{i=1}^{n-1} (1+\frac{q}{i})} \pi(1), \ \ n\ge 2,
\end{equation}
for the stationary measure $\pi$. It can be checked directly from the global balance equations for discrete time Markov chain $s_n$.

Denote $p_n$ the discrete-time embedding chain corresponding to the bid-price continuous-time dynamics of $P_b(t)$. The dynamics of $p_n$ can be represented as a function of the the dynamics of the spread $s_n$, by the following formula. 
\begin{equation}\label{eq2}
p_n = \sum_{i=0}^n F(s_{n-1}, s_n, U_n),
\end{equation}
where $U_n$ is the sequence of i.i.d. uniformly distributed random variables, and where the function $F$ is
\begin{equation}\label{f}
F(s_{n-1}, s_n, U_n) = \left\{ \begin{aligned} 
-1, & \mbox{ if } s_{n} = s_{n-1} + 1 \mbox{ and } U_n < \frac{\beta_-}{\gamma_+}; \\
s_{n-1} - s_{n},  & \mbox{ if } s_{n} < s_{n-1} \mbox{ and } U_n < \frac{\beta_+}{\gamma_-}; \\
0, & \mbox{ otherwise}.
\end{aligned}
\right.
\end{equation}
Note that $\hat{s}_n = (s_{n-1}, s_n, U_n)$ is a Markov chain, and let $\hat\pi$ be their invariant measure. Observe that the discrete part of invariant measure $\hat{\pi}$ for the process $(s_{n-1},s_n)$ is the product $\hat{\pi}(x,y)=\pi(x)p(x,y)$
 By ergodic theorem we obtain LLN for embedding chain $p_n$.
\begin{lemma}\label{LLN}
%With probability one
\begin{equation}\label{eq1}
\frac{p_n}{n} \to v  \hspace{0.5cm} \mbox{ a.s.} \hspace{0.5cm} n \to \infty
\end{equation}
where
$$
v = - \frac{\beta_-}{\gamma_++\gamma_-} - \frac{\pi(1)}{\gamma_++\gamma_-} \Bigl( \frac{\beta_-\gamma_-}{\gamma_+} + \frac{\beta_+}{2} \Bigr) + \frac{\beta_+}{\gamma_++\gamma_-} \frac{1}{2}  \sum_{s=1}^\infty s\pi(s).
$$
\end{lemma}

\begin{proof}
The ergodic theorem states the convergence (\ref{eq1}). Thus we need only to find the $v$, which is the expectation over the invariant measure $\hat\pi$ of the increments $F(\hat{s}_n)$:
$$
\begin{aligned}
\mathbb{E}_{\hat\pi} (F(\hat{s}_n)) = &  -\pi(1) \frac{\beta_-}{\gamma_+} - \sum_{s=2}^\infty \pi(s) \frac{\gamma_+}{\gamma_++\gamma_-} \frac{\beta_-}{\gamma_+} + \sum_{s_2=2}^\infty \sum_{s_1=1}^{s_2-1} (s_2-s_1) \pi(s_2) \frac{\gamma_-}{\gamma_++\gamma_-}\frac{1}{s_2-1} \frac{\beta_+}{\gamma_-} \\
%= & - \frac{\beta_-}{\gamma_++\gamma_-} - \frac{\gamma_-}{\gamma_++\gamma_-} \frac{\beta_-}{\gamma_+}\mu(1) + \frac{\gamma_-}{\gamma_++\gamma_-} \frac{\beta_+}{\gamma_-} \sum_{s_2=2}^\infty\mu(s_2) \frac{1}{s_2-1}  \cdot \frac{s_2(s_2-1)}{2} \\
= & - \frac{\beta_-}{\gamma_++\gamma_-} - \frac{\pi(1)}{\gamma_++\gamma_-} \Bigl( \frac{\beta_-\gamma_-}{\gamma_+} + \frac{\beta_+}{2} \Bigr) + \frac{\beta_+}{\gamma_++\gamma_-} \frac{1}{2}  \sum_{s=1}^\infty s\pi(s),
\end{aligned}
$$
which finishes the proof of the lemma. $\Box$
\end{proof}

To finish the proof of Theorem~\ref{LLNh} we observe
$$
\begin{aligned}
\lim_{t\to\infty}\frac{P_b(t)}{t} &= \lim_{t\to\infty} \frac{p_{N_t}}{N_t} \frac{N_t}{t} = v (\gamma_++\gamma_-) \\
& = - \beta_- - \pi(1)\Bigl( \frac{\beta_-\gamma_-}{\gamma_+} + \frac{\beta_+}{2} \Bigr) +  \frac{\beta_+}{2}  \sum_{s=1}^\infty s\pi(s) =: D, %\\
%& = - \beta_- - \pi(1)\gamma\Bigl( \frac{\beta_-\gamma_-}{\gamma_+} + \frac{\beta_+}{2} \Bigr) +  \frac{\beta_+}{2}\gamma \sum_{s=1}^\infty s\pi(s) =: D,
\end{aligned}
$$
where $N_t$ is the Poisson process with rate $\gamma_++\gamma_-$.
\end{proof}

This result confirms our conjecture about the influence of the spread on the local trend of the prices. From a practical point of view, knowing the jump rate, of the bid and ask prices, we can calculate in a simple way the trend of the prices.
%%--------------------------------------------------
%\noindent
\subsection{Price volatility (CLT for the prices)}

In this section we are interested in studying a link between the price volatility  and the prices jump rates. In particular, we prove a Central Limit Theorem(CLT) for the price process. We express the volatility of price changes, around local drift, in terms of the jumps rate of ask and bid prices, i.e., for the process represented by (\ref{eq2}) the central limit theorem holds.

Again we first prove CLT for the imbedded discrete-time dynamics $p_n$ of the price, and then we establish CLT for continuous-time chain $P_b(t)$. 

\subsubsection{CLT for embedding $p_n$}

In order to prove it one way, for example, to prove that the chain $\hat{s}_n$ is geometrically ergodic, i.e., the rate of convergence to the invariant measure is geometric:
\begin{equation}\label{GeomErg}
\| P^n(x, \cdot) -\hat \pi(\cdot) \| \le M(x) q^n, \mbox{ for some } q<1,
\end{equation}
where $\|\cdot\|$ stands for total variation norm. And then apply the result for geometrically ergodic chains. Formally, we will require that the chain  $\hat{s}_n$ should be Harris ergodic Markov chain, that it is true for the chain $\hat{s}_n$. Here we refer to \cite{jones2004markov}.

\begin{theorem}\label{CLTGeomErg}
(See Theorem 9, \cite{jones2004markov}) Let $X$ be a Harris ergodic Markov chain on $\mathsf{X}$ with invariant distribution $\pi$ and let $f: \mathsf{X} \to \mathbb{R}$ is a Borel function. Assume that $X$ is geometrically ergodic and $E_\pi |f(x)|^{2+\delta} < \infty$ for some $\delta >0$. Then for any initial distribution, as $n\to\infty$ 
$$
\sqrt{n} (\bar{f}_n - E_\pi f) \to N(0, \sigma_f^2)
$$
in distribution.
\end{theorem}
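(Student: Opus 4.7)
The plan is to prove this CLT by the regenerative / Nummelin-splitting technique, which is the standard route for Harris ergodic chains lacking a built-in i.i.d.\ structure. The alternative --- solving the Poisson equation $g - Pg = f - E_\pi f$ and invoking a martingale CLT --- would also work, but the regenerative route integrates the $(2+\delta)$-moment hypothesis most transparently and yields an explicit tour-based expression for $\sigma_f^2$.

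First, I would invoke Nummelin's splitting construction: Harris ergodicity guarantees a small set $C$ and a minorization $P^m(x,\cdot)\geq \varepsilon\,\nu(\cdot)$ for $x\in C$; passing to the $m$-skeleton if necessary and splitting on $C$ produces an extended chain on $\mathsf{X}\times\{0,1\}$ possessing an accessible atom $\alpha$, whose $\mathsf{X}$-marginal coincides in law with $X$. Letting $\tau_0=0$ and $\tau_k$ be the $k$-th subsequent return time to $\alpha$, the strong Markov property makes the tour sums
$$
T_k \,:=\, \sum_{i=\tau_{k-1}+1}^{\tau_k}\bigl(f(X_i) - E_\pi f\bigr)
$$
i.i.d., with $E T_1 = 0$ from the cycle formula for $\pi$. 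I would then decompose $S_n := \sum_{i=1}^n (f(X_i)-E_\pi f)$ as $R_0 + \sum_{k=1}^{N_n} T_k + R_1$, where $N_n$ counts completed tours by time $n$ and $R_0, R_1$ are the initial and terminal fragments. Geometric ergodicity is equivalent to $E[\theta^\tau]<\infty$ for some $\theta>1$, so every polynomial moment of $\tau$ is finite; combined with $E_\pi |f|^{2+\delta}<\infty$, a H\"older-type estimate upgrades this to $E T_1^2 < \infty$, with limiting variance $\sigma_f^2 = E T_1^2 / E\tau$. Applying the classical i.i.d.\ CLT to $(T_k)$, together with the renewal law of large numbers $N_n/n \to 1/E\tau$ and elementary bounds on the fragments $R_0, R_1$ (controlled by finite moments of $\tau$), yields $n^{-1/2} S_n \Rightarrow N(0,\sigma_f^2)$; independence from the initial distribution follows by coupling to an $\alpha$-started chain within the first tour.

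The main obstacle is squeezing $E T_1^2 < \infty$ out of merely a $(2+\delta)$-moment of $f$: this is precisely where geometric (rather than plain) ergodicity is essential, and the H\"older interpolation between the heavy tail of $|f|$ and the exponentially light tail of $\tau$ must be carried out with a sharp choice of exponents. Everything else --- verifying the minorization / small-set conditions (in our spread application the state $\{1\}$ is in fact a genuine atom, so the splitting is trivial), the boundedness of the boundary fragments, and the central limit step itself --- is standard machinery that should cause no real difficulty.
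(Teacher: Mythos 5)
First, a point of comparison: the paper does not prove this statement at all --- it is imported verbatim as Theorem 9 of \cite{jones2004markov} and used as a black box to obtain the CLT for the embedded price chain --- so your proposal has to be judged against the literature proof rather than against anything internal to the paper. Judged that way, your regenerative outline is essentially the standard derivation: Nummelin splitting to manufacture an accessible atom, i.i.d.\ tour sums $T_k$ with $ET_1=0$ by the Kac/cycle formula, the i.i.d.\ CLT combined with the renewal law $N_n/n\to 1/E\tau$ and negligibility of the boundary fragments, and solidarity of the CLT in the initial distribution; the moment step you single out --- deducing $ET_1^2<\infty$ from geometric ergodicity together with $E_\pi|f|^{2+\delta}<\infty$ --- is precisely the content of the known argument of Hobert, Jones, Presnell and Rosenthal (2002), so you have correctly located the crux. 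One imprecision to repair: your clause ``so every polynomial moment of $\tau$ is finite'' suggests polynomial moments would suffice, and they would not. After Cauchy--Schwarz/H\"older with exponents $(2+\delta)/2$ and $(2+\delta)/\delta$, one is left with a series whose generic term involves $\left(E\bigl[\,|f(X_i)|^{2+\delta}\,\mathbf{1}\{\tau\ge i\}\,\bigr]\right)^{2/(2+\delta)}$; by the cycle formula only the \emph{first} powers of these quantities are summable, so the fractional power must be beaten by a geometrically decaying factor $P(\tau\ge i)\le C\rho^{i}$, which is exactly what Kendall's theorem supplies under geometric ergodicity --- the exponential moment of $\tau$ is used irreplaceably, as your subsequent sentence about the ``exponentially light tail'' in fact acknowledges. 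A second small debt: when the minorization only holds for the $m$-skeleton with $m>1$, the tours regenerate the skeleton, and transferring the CLT back to the original chain requires handling block sums of length $m$; routine, but it should be stated. With these two repairs your sketch is correct and complete in its architecture, and arguably more self-contained than derivations routed through strong-mixing CLTs; as you note, in the paper's actual application the chain is countable, so the splitting is vacuous and any fixed state already serves as the atom.
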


Let us prove first, that the chain $\hat{s}_n$ is geometrically ergodic. Indeed, there are general results on the so-called drift conditions for the chain to be geometrically ergodic, see \cite{meyn2012markov}. But for the countable Markov chain we can apply the criteria of \cite{popov1977geometric}, see Theorem 2:

\begin{enumerate}
\item[] \textit{ A countable Markov chain is geometrically ergodic iff there exists a finite set $B\subset \mathsf{X}$ and function $g(x), x\in  \mathsf{X}$ such that $E_x e^{g(X_1) - g(X_0)} \le q < 1,$ when $x\notin B$ and $E_x e^{g(X_1) - g(X_0)} < \infty$, if $x\in B$.}
\end{enumerate}

Using this criteria we will find the corresponding function $g$ in the following form $V(\cdot) \equiv e^{g(\cdot)}$. 
To check the conditions we should provide the function $V(\cdot)$ for the chain $\hat{s}_n$: let
$V(i,j,u) = i^2 + j^2$, where the constant $c$ will be chosen later. Indeed, let for simplicity $p_+ = \frac{\gamma_+}{\gamma_++\gamma_-} $ and $p_- = \frac{\gamma_-}{\gamma_++\gamma_-}$, then
%$$
%\begin{aligned}
\begin{eqnarray}\label{e1} 
\mathbb{E} \left. \left( \frac{ V(s_{n}, s_{n+1}, U_{n+1}) }{ V(s_{n-1}, s_{n}, U_{n}) } \ \right|\ s_{n-1} = y, s_{n} = x \right) 
%\nonumber \\
& = &\sum_{k=1}^{x-1} \left( \frac{ (x-k)^2 + x^2 }{x^2 + y^2} \right) \frac{p_-}{x-1} + \left( \frac{ (x +1)^2 + x^2}{y^2 + x^2}  \right) p_+ \nonumber \\ 
& = &\frac{x(2x-1)}{6(x^2+y^2)} p_- + \frac{x^2}{x^2+ y^2}p_- +  \frac{ (x +1)^2 + x^2}{y^2 + x^2} p_+ 
%\bigl( \sum_{k=1}^{x-1} k^2 \bigl) \frac{p_-}{x-1} - x^2 p_- + (2x +1)p_+ 
%& < \bigl( \sum_{k=0}^{x-1} k^2 \bigl) \frac{p_-}{x} - p_-(2x-1)^2 + p_+((2x+1)^2 - (2x-1)^2) \\
%= - \frac{4x^2 + x}{6} p_- + (2x +1)p_+ \le - \frac{2}{3} p_- x^2,
\end{eqnarray}
%\end{aligned}
%$$
Consider two cases. First, we suppose that $x < y$. Then
$$
(\ref{e1})  \le \frac{x(2x-1)}{6(x^2+(x+1)^2)} p_- + \frac{x^2}{x^2+ (x+1)^2}p_- +  \frac{ (x +1)^2 + x^2}{x^2+ (x+1)^2} p_+ 
< \frac{2}{3} p_- + p_+  \le q < 1.
$$
Thus, the condition holds for all $(x,y)$ such that $x<y$. In the second case, $x>y$, we have $y=x-1$: 
\begin{eqnarray*}%\label{l1}
(\ref{e1})  &=& 
\frac{x(2x-1)}{6(x^2+(x-1)^2)} p_- + \frac{x^2}{x^2+ (x-1)^2}p_- +  \frac{ (x +1)^2 + x^2}{x^2+ (x-1)^2} p_+ \nonumber \\
&=& \frac{(8x^2 -x)p_- + 24x p_+}{6(x^2+(x-1)^2)} + p_+.
\end{eqnarray*}
There is no $q<1$ such that $(\ref{e1}) \le q$ for all $x$. But it is easy to see that there exists $q> \frac{2}{3} p_- + p_+$ and $C\equiv C(p_-,p_+,q) >0$ such that and all $(x,y)$ under the condition $x\ge C$
$$
(\ref{e1})  \le q < 1.
$$
Thus, in this case we define the set $B$ from the condition as $B=\{ (x,y): \ x\ge C\}$. This complete the proof of the geometrically ergodicity of the chain $\hat{s}_n$.  $\Box$

The second step, we should to prove that $E_{\hat\pi} |F(\hat{s}_n)|^{2+\delta} < \infty$ for some $\delta >0$, where the function $F$ defined by (\ref{f}). For this we need some information of the behavior of the invariant measure.

As before, let $\pi$ be the invariant measure for the chain $s_n$. The condition takes the following form.
\begin{equation}\label{e2}
\begin{aligned}
\mathbb{E}_{\hat{\pi}} |F(\hat{s}_n)|^{2+\delta} & = \sum_{x=1}^\infty \Bigl( \sum_{k=1}^{x-1} k^{2+\delta} \hat{\pi}(x,x-k) \frac{\beta_+}{\gamma_-} + \hat{\pi}(x,x+1) \frac{\beta_-}{\gamma_+} \Bigr) \\
& = \frac{\beta_-}{\gamma_+ + \gamma_-} + \frac{\beta_+}{\gamma_+ + \gamma_-}  \sum_{x=1}^\infty  \pi(x)  \sum_{k=2}^{x-1} \frac{k^{2+\delta}}{x-1}\\
& <  \frac{\beta_-}{\gamma_+ + \gamma_-} + \frac{\beta_+}{\gamma_+ + \gamma_-} \frac{1}{3+\delta}  
\sum_{x=1}^\infty  \pi(x) \frac{x^{3+\delta}}{x-1}
\end{aligned}
\end{equation}
Thus, if we prove that $\pi(x)$ decreases sufficiently, then the last series in (\ref{e2}) will converge. Indeed, 
from the relation (\ref{inv1}) we have directly
%the invariant measure for $s_n$ satisfies the following global balance equation: let as before $p = \frac{\gamma_+}{\gamma_++\gamma_-} $ and $q = \frac{\gamma_-}{\gamma_++\gamma_-}$, then
%$$
%\left\{ 
%\begin{aligned}
%\mu (1) & =  q\mu(2) + \frac{q}{2}\mu(3) + \frac{q}{3}\mu(4) + \dots \\
%\mu (2) & = \mu(1) + \frac{q}{2}\mu(3) + \frac{q}{3}\mu(4) + \dots \\
%\mu (3) & = p \mu(2) + \frac{q}{3}\mu(4) + \frac{q}{4}\mu(5) + \dots \\
%\dots  \\
%& \sum_{i=1}^\infty \mu(i) =1.
%\end{aligned}
%\right.
%\Leftrightarrow
%\left\{ 
%\begin{aligned}
%2 \mu (1) & = (1+q)\mu(2) \\
%(1+p) \mu (2) & = \mu(1) + (1+\frac{q}{2})\mu(3) \\
%(1+p) \mu (3) & = p \mu(2) + (1+\frac{q}{3})\mu(4) \\
%\dots \\
%& \sum_{i=1}^\infty \mu(i) =1
%\end{aligned}
%\right.
%$$
%$$
%\Leftrightarrow
%\left\{ 
%\begin{aligned}
%\mu (2) & = \frac{2}{1+q} \mu(1) \\
%\mu (3) & = \frac{3p}{(1+q)(1+\frac{q}{2})} \mu(1) \\
%\mu (n) & = \frac{1+p}{1+\frac{q}{n-1}} \mu(n-1) - \frac{p}{1+\frac{q}{n-1}} \mu(n-2), \ \ n \ge 4 \\
%& \sum_{i=1}^\infty \mu(i) =1
%\end{aligned}
%\right.
%\Leftrightarrow
%\left\{ 
%\begin{aligned}
%\mu (2) & = \frac{2}{1+q} \mu(1) \\
%\mu (n) & = \frac{np^{n-2}}{ \prod_{i=1}^{n-1} (1+\frac{q}{i})} \mu(1), \ \ n \ge 2 \\
%& \sum_{i=1}^\infty \mu(i) =1.
%\end{aligned}
%\right.
%$$
%Soon, we have
$$
\pi(n) < np^{n-2} \pi(1),
$$
which provide the convergence of the last series in (\ref{e2}). It proves the conditions for CLT.

\begin{lemma} \label{CLT}
$$
\frac{p_n - n v}{\sqrt{ n \mathbb{V}ar_{\hat\pi}\big(F(\hat{s}_n)\big)}} \overset{d}{\longrightarrow} N(0, 1) \hspace{0.5cm} \textrm{a.s.} \hspace{0.5cm} n \to \infty
$$
where
$$
\mathbb{V}ar_{\hat\pi}\big(F(\hat{s}_n)\big)= \frac{\beta_-}{\gamma} + \frac{\mu(1)}{\gamma} \Bigl( \frac{\beta_-\gamma_-}{\gamma_+} - \frac{\beta_+}{6} \Bigr) + \frac{\beta_+}{\gamma} \frac{1}{6}  \sum_{s=1}^\infty s(2s-1)\pi(s)-\big[\mathbb{E}_{\hat\pi} (F(\hat{s}_n))\big]^2
$$
\end{lemma}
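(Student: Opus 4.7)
The plan has two parts: apply the Markov chain CLT Theorem~\ref{CLTGeomErg} to the chain $\hat{s}_n=(s_{n-1},s_n,U_n)$ with observable $F$, and then compute the asymptotic variance in closed form.

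For the first part, the three hypotheses of Theorem~\ref{CLTGeomErg} are essentially already in hand. Harris ergodicity of $\hat{s}_n$ follows from irreducibility and aperiodicity of the underlying spread chain $s_n$ (from any state $k>1$ one reaches state $1$ with positive probability in one step, and from $1$ to $2$ deterministically) together with the independent i.i.d. uniform factor $U_n$. Geometric ergodicity of $\hat{s}_n$ has just been verified via the Popov criterion with test function $V(i,j,u)=i^2+j^2$, and the moment bound $\mathbb{E}_{\hat\pi}|F(\hat{s}_n)|^{2+\delta}<\infty$ follows from (\ref{e2}) and the decay estimate $\pi(n)<np^{n-2}\pi(1)$. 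Since $p_n=\sum_{i=1}^n F(\hat{s}_i)$ and $v=\mathbb{E}_{\hat\pi}F(\hat{s}_n)$ by Lemma~\ref{LLN}, Theorem~\ref{CLTGeomErg} yields
\[
\frac{p_n - nv}{\sqrt{n}} \overset{d}{\longrightarrow} N(0,\sigma_F^2)
\]
for some asymptotic variance $\sigma_F^2$.

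For the second part, I would compute the target variance by conditioning on $(s_{n-1},s_n)$, which has stationary distribution $\hat\pi(x,y)=\pi(x)p(x,y)$. Splitting $\mathbb{E}_{\hat\pi}F^2$ according to the three cases of $F$ in (\ref{f}), inserting the transition probabilities $p(x,x+1)=\gamma_+/\gamma$ for $x\ge 2$ (the case $x=1$ contributes the $\pi(1)$-dependent term) and $p(x,x-k)=\gamma_-/((x-1)\gamma)$, and using $\sum_{k=1}^{x-1}k^2=x(x-1)(2x-1)/6$, the three explicit sums in the stated formula appear. Subtracting $[\mathbb{E}_{\hat\pi}F(\hat{s}_n)]^2$ (computed in the proof of Lemma~\ref{LLN}) then gives $\mathbb{V}ar_{\hat\pi}(F(\hat{s}_n))$ in the form written.

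The main obstacle is reconciling the asymptotic variance $\sigma_F^2$ delivered by the Markov chain CLT with the one-step variance $\mathbb{V}ar_{\hat\pi}(F(\hat{s}_n))$ appearing in the statement. In general
\[
\sigma_F^2 = \mathbb{V}ar_{\hat\pi}(F) + 2\sum_{k\ge 1}\mathrm{Cov}_{\hat\pi}\bigl(F(\hat{s}_0),F(\hat{s}_k)\bigr),
\]
so the autocovariance corrections must be handled. The cleanest route is via the Poisson equation: solve $(I-P)g=F-v$ for $g$ bounded (which geometric ergodicity guarantees), write the centered partial sum as $M_n+$ boundary terms for the martingale $M_n=\sum_{i=1}^n\bigl(g(\hat{s}_i)-Pg(\hat{s}_{i-1})\bigr)$, and apply the martingale CLT; the resulting variance can then be matched against the one-step expression, or the formula should be reinterpreted with $\sigma_F^2$ in place of $\mathbb{V}ar_{\hat\pi}(F)$. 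This identification is the delicate step; everything else is bookkeeping on the explicit rates.
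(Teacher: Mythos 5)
Your route is, for everything the paper actually establishes, the paper's route almost verbatim: the paper proves geometric ergodicity of $\hat{s}_n$ by Popov's criterion with exactly your test function $V(i,j,u)=i^2+j^2$, checks $\mathbb{E}_{\hat\pi}|F(\hat{s}_n)|^{2+\delta}<\infty$ via the computation (\ref{e2}) together with the tail bound $\pi(n)<np^{n-2}\pi(1)$ from (\ref{inv1}), and then invokes Theorem~\ref{CLTGeomErg} (Jones's CLT for Harris ergodic, geometrically ergodic chains). The explicit variance formula is obtained by the same conditioning bookkeeping you describe, using $\hat\pi(x,y)=\pi(x)p(x,y)$ and $\sum_{k=1}^{x-1}k^2=x(x-1)(2x-1)/6$, which is where the $\sum_s s(2s-1)\pi(s)$ term comes from. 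So parts one and two of your proposal reproduce the paper's argument.

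The step you single out as delicate is the genuinely interesting point of comparison, because the paper does not address it at all: Theorem~\ref{CLTGeomErg} delivers the time-average variance $\sigma_F^2=\mathbb{V}ar_{\hat\pi}(F)+2\sum_{k\ge 1}\mathrm{Cov}_{\hat\pi}\bigl(F(\hat{s}_0),F(\hat{s}_k)\bigr)$, whereas the lemma normalizes by the one-step stationary variance alone, and the paper gives no argument that the autocovariance sum vanishes. It does not vanish generically. Under stationarity, $\mathrm{Cov}\bigl(F(\hat{s}_n),F(\hat{s}_{n+1})\bigr)=\mathbb{E}_{\hat\pi}\bigl[F(\hat{s}_n)\,h(s_n)\bigr]-v^2$, where $h(x)=\mathbb{E}\bigl[F(\hat{s}_{n+1})\mid s_n=x\bigr]$ equals $-\beta_-/\gamma+\beta_+x/(2\gamma)$ for $x\ge 2$ and $-\beta_-/\gamma_+$ for $x=1$; this is a non-constant function of the landing state, and $F(\hat{s}_n)$ is correlated with $s_n$ (a large catastrophe produces simultaneously a large increment and a small landing state), so the lag-one covariance is nonzero for generic parameter values. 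Hence your martingale/Poisson-equation decomposition, or equivalently restating the lemma with $\sigma_F^2$ as the normalizing variance, is not an optional refinement but what a complete proof requires; in this respect your proposal is more careful than the paper, whose normalization by $\mathbb{V}ar_{\hat\pi}\bigl(F(\hat{s}_n)\bigr)$ should properly be read as $\sigma_F^2$. (Two minor remarks: the ``a.s.'' decorating a convergence in distribution in the statement is spurious, and the variance formula mixes $\mu(1)$ with $\pi$, a discrepancy your clean derivation via $\hat\pi$ would also expose.)
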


The Lemma \ref{CLT} relates the ``coarse-grained'' volatility of intraday returns at lower frequencies to the high-frequency jumps rates of prices. In simple terms, Lemma \ref{CLT} states that, observed over time, the prices has a diffusive behavior around  a local drift with a diffusion coefficient $\mathbb{V}ar_{\hat\pi}\big(F(\hat{s}_n)\big)$. Therefore, price volatility, as a function of the number of micro-jumps in prices, is given by
\begin{equation}\label{diff_v}
\sigma_n = {\sqrt{ n \mathbb{V}ar_{\hat\pi}\big(F(\hat{s}_n)\big)}}
\end{equation}
where $n$  is the total number of high-frequency prices jumps. Formula (\ref{diff_v}) yields an estimator for price volatility which may computed without observing the price on long-term. Optionally, the parameter $\sigma_n$ can be interpreted as the intraday realized volatility of the asset. Therefore, the relation (\ref{diff_v}) links the realized volatility with the high-frequency parameters of the OB.

\subsubsection{CLT for continuous-time $P_b(t)$}

Based on previous result let us prove the CLT for the $P_b(t)$. As before let $N_t$ be the Poisson process with rate $\gamma = \gamma_++\gamma_-$, which is the number of jumps of the process $X(t)$. The following representation takes place.
$$
\begin{aligned}
\sqrt{t} \left( \frac{P_b(t)}{t} - D \right) = \sqrt{N_t}\left( \frac{p_{N_t}}{N_t} - v \right)  \sqrt{\frac{N_t}{t}} + \sqrt{t} \left( \frac{N_t}{t} v - D \right) 
\end{aligned}
$$ 
According Lemma~\ref{CLT} and CLT for the Poisson process we expect
\begin{equation}\label{twodist}
\begin{aligned}
\sqrt{N_t}\left( \frac{p_{N_t}}{N_t} - v \right) & \Rightarrow N(0, \sigma^2) \mbox{ in distribution for some }\sigma^2 \\
\sqrt{t} \left( \frac{N_t}{t} - \gamma \right) & \Rightarrow N(0, \gamma) \mbox{ in distribution} \\
\frac{N_t}{t} & \to \gamma \mbox{ a.s. }
\end{aligned}
\end{equation}
The second and third convergence are well known, when the first one can be proved as follows: let $F_{n,\sigma^2} (x)$ be the cumulative distribution function of the scaled imbedded price Markov chain $p_n$ from Lemma~\ref{CLT}, then
for any $\delta >0$ there exists $n_\delta$ such that for all $n>n_\delta$
$$
\left| F_{n} (x) - \Phi_{\sigma^2} (x) \right| < \delta,
$$
where $\Phi_{\sigma^2} (x)$ stands for the cumulative normal distribution with zero mean and variance $\sigma^2$. Then
$$
\begin{aligned}
\mathbb{P} \left( \sqrt{N_t}\left( \frac{p_{N_t}}{N_t} - v \right) \le x \right) 
& = \mathbb{P}  \left. \left( \sqrt{N_t}\left( \frac{p_{N_t}}{N_t} - v \right) \le x \ \right|\ N_t \le n_\delta \right) 
\mathbb{P} (N_t \le n_\delta) \\
& + \sum_{n=n_\delta + 1}^\infty \mathbb{P}  \left. \left( \sqrt{N_t}\left( \frac{p_{N_t}}{N_t} - v \right) \le x \ \right|\ N_t = n \right) \mathbb{P} (N_t = n) \\
& \le \mathbb{P} (N_t \le n_\delta) + \Phi_{\sigma^2} (x) + \delta \\
& \to \Phi_{\sigma^2} (x) \mbox{ as }t\to\infty, \mbox{ and as }\delta \to 0 \\
\mathbb{P} \left( \sqrt{N_t}\left( \frac{p_{N_t}}{N_t} - v \right) \le x \right) 
& \ge \left( \Phi_{\sigma^2} (x) - \delta \right)  \mathbb{P} (N_t > n_\delta) \\
& = \Phi_{\sigma^2} (x) - \delta  + \left( \Phi_{\sigma^2} (x) - \delta \right)  \mathbb{P} (N_t \le n_\delta)  \\
& \to \Phi_{\sigma^2} (x) \mbox{ as }t\to\infty, \mbox{ and as }\delta \to 0
\end{aligned} 
$$
observe that two normal variables from (\ref{twodist}) are not independent, but they are asymptotically non-correlated, moreover, they are asymptotically independent. Indeed, let us show that 
$$
\sqrt{N_t}\left(\frac{p_{N_t}}{N_t}-v\right) \text{ and } \sqrt{t}\left(\frac{N_t}{t}-\gamma\right)
$$
are asymptotically independent. Since the second sequence is the measurable function of the $N_t$ it is enough to prove that for all $x\in \mathbb{R}$, set $A\subseteq \mathbb{N}$ and $\varepsilon>0$ the following inequality holds
\begin{equation}\label{as.ind}
\lim\limits_{t\rightarrow\infty}\left|\mathbf{P}\left(\sqrt{N_t}\left(\frac{p_{N_t}}{N_t}-v\right)<x,N_t\in A\right)-
\Phi_{\sigma^2}(x)\mathbf{P}(N_t\in A)\right|\leq\varepsilon.
\end{equation}
If the set $A$ is bounded from above, then the inequality holds:
$$
0\leq\lim\limits_{t\rightarrow\infty}\mathbf{P}\left(\sqrt{N_t}\left(\frac{p_{N_t}}{N_t}-v\right)<x,N_t\in A\right)\leq
\lim\limits_{t\rightarrow\infty}\mathbf{P}(N_t\in A)=0.
$$
Suppose now that $A$ is not bounded. In this case for any $\delta>0$ we obtain
$$
\begin{aligned}
&\lim\limits_{t\rightarrow\infty} \mathbf{P}\left(\sqrt{N_t}\left(\frac{p_{N_t}}{N_t}-v\right)<x,N_t\in A\right) \\
& =
\lim\limits_{t\rightarrow\infty}\mathbf{P}\left(\sqrt{N_t}\left(\frac{p_{N_t}}{N_t}-v\right)<x,N_t\in A\cap[0,n_\delta]\right)
+\lim\limits_{t\rightarrow\infty}\sum\limits_{k=n_\delta+1}^\infty\mathbf{P}\left(\sqrt{k}\left(\frac{p_{k}}{k}-v\right)<x,
N_t\in A\cap\{k\}\right)
\\ &=\lim\limits_{t\rightarrow\infty}\sum\limits_{k=n_\delta+1}^\infty\mathbf{P}\left(\sqrt{k}\left(\frac{p_{k}}{k}-v\right)<x\right)
\mathbf{P}(N_t\in A\cap\{k\}).
\end{aligned}
$$
Thus, for any $\delta>0$ we have
$$
\begin{aligned}
\lim\limits_{t\rightarrow\infty}(\Phi_{\sigma^2}(x)-\delta)\mathbf{P}(N_t\in A\cap[n_\delta,\infty))& \leq\lim\limits_{t\rightarrow\infty}\mathbf{P}\left(\sqrt{N_t}\left(\frac{p_{N_t}}{N_t}-v\right)<x,N_t\in A\right)
\\ &\leq
\lim\limits_{t\rightarrow\infty}(\Phi_{\sigma^2}(x)+\delta)\mathbf{P}(N_t\in A\cap[n_\delta,\infty)),
\end{aligned}
$$
and the following inequality holds
$$
\lim\limits_{t\rightarrow\infty}\left|\mathbf{P}\left(\sqrt{N_t}\left(\frac{p_{N_t}}{N_t}-v\right)<x,N_t\in A\right)-
\Phi_{\sigma^2}(x)\mathbf{P}(N_t\in A)\right|
\leq\delta\lim\limits_{t\rightarrow\infty}\mathbf{P}(N_t\in A\cap[n_\delta,\infty))<\delta.
$$
Choosing $\delta=\varepsilon$, we obtain the inequality (\ref{as.ind}).

%Indeed, let $a_n = \mathbb{E} \left( \sqrt{n}\left( \frac{p_n}{n} -  v \right)\right)$, then we know that $a_n \to 0$ as $n\to\infty$. The expectation of the product can be represented as
%$$
%\begin{aligned}
%v & \mathbb{E} \left( \sqrt{N_t}\left( \frac{p_{N_t}}{ N_t} -  v \right) \sqrt{ \frac{N_t}{t}} \sqrt{t}\left( \frac{N_t}{t} -\lambda \right) \right) = v  \mathbb{E} \left( a_{N_t} \sqrt{ \frac{N_t}{t}} \sqrt{t}\left( \frac{N_t}{t} -\lambda \right)\right) \\
%& \le v  \sqrt{ \mathbb{E} (a_{N_t}^2) }  \sqrt{ \mathbb{E} \left( N_t \left(   \frac{N_t}{t} -\lambda  \right)^2 \right) }  =
%v  \sqrt{ \mathbb{E} (a_{N_t}^2) } \left( \gamma^2 + \frac{\gamma}{t} \right) \\
%& = \mu \left( \frac{cov(p_{N_t}, N_t) }{t} - \lambda \right) \to cov.
%\end{aligned}
%$$
%In the same way as before, it is easy to see, that $\mathbb{E} (a_{N_t}^2) \to 0$, as $t\to\infty$. 

It finishes the proof of the convergence
$$
\begin{aligned}
\sqrt{t} \left( \frac{P_b(t)}{t} - D \right) \to N\left(0, (\sigma^2 + v^2)\gamma\right)
\end{aligned}
$$
$\Box$
%%--------------------------------------------------

%\noindent
\subsection{Large Deviations for the spread $S(t)$.} 

It is known that in liquidity fluctuations contexts even a small order can create a large price change and consequently create a very large spread (\cite{doyne2004really}, \cite{bouchaud2009markets}). Therefore, our interest is to understand how large
changes in the spread occur without altering parameters of the model. We believe that this type of analysis can be used to assess the resilience of the OB to severe fluctuations in the liquidity. 

In this section we present an application of the large deviations theory to Markov process describing dynamics of the spread, i.e., we study the large deviations asymptotics for spread process. Our goals is to find the most probable trajectory corresponding to a certain state of spread, in particular very large, during the time interval.

Large deviations for the Poisson processes with uniform (almost uniform) catastrophes was considered in the papers \cite{logachov2019large} and \cite{logachov2018local}. Large deviation can be considered as a finishing step in a sequence of limit theorems for the processes. The theory of large deviation is well developed at the moment, but the processes considered here do not satisfy the ``classical'' conditions, because why the proof of the large deviations is still very technic. 

In order to provide the large deviations we need some increasing scaling parameter. Let $T$ be the length of the time interval $[0,T]$ we observed our process. We consider the following scaled process
$$
S_T(t) = \frac{S(tT)}{T}, \ \ t\in [0,1].
$$ 
We say that the family of random variable $S_T(1)$ satisfies large deviation principle (LDP) on $\mathbb{R}$ with the rate function $I=I(x): \mathbb{R} \to [0,\infty]$, if for any $c\ge 0$ the set $\{x\in\mathbb{R}: \ I(x) \le c\}$ is compact and for any set $B\in \mathcal{B}(\mathbb{R})$ the following inequalities hold:
$$
\limsup_{T\to\infty} \frac{1}{T} \ln \mathbb{P}(S_T(1) \in B) \le - \inf_{x\in [B]} I(x)\ \  \mbox{ and }\ \ 
\liminf_{T\to\infty} \frac{1}{T} \ln \mathbb{P}(S_T(1) \in B) \le - \inf_{x\in (B)} I(x),
$$
where $\mathcal{B}(\mathbb{R})$ is the Borel $\sigma$-algebra on $\mathbb{R}$ and $[B], (B)$ are closure and open interior of the set $B$ correspondingly. This principle was established in the paper \cite{logachov2019large}, in which the logarithmic asymptotic for the probability $\mathbb{P}(S_T(1) >x)$ was calculated. Note, that the principle was proved for the state $x$ of the spread at the time $T$, it is not the principle on the functional space. The principle on the functional (trajectory) space provides us the possibility to find the (unique) \textit{optimal} trajectory -- the trajectory which shows how such deviation (rare event) occurs taking into account the evolution of the spread.

As a first attempt for proving the principle on the functional space is to prove the \textit{local} large deviation -- the asymptotics for the probability of the process stay in a small neighborhood of some continuous function. We say that the family of the processes $S_T(\cdot)$ satisfies \textit{local large deviation principle} (LLDP) on the set $G\subset \mathbb{D}[0,1]$ with rate function $I=I(f): \mathbb{D}[0,1] \to [0,\infty]$ if for any function $f\in G$ the following inequalities hold
$$
\lim_{\varepsilon\to 0}\limsup_{T\to\infty} \frac{1}{T} \ln \mathbb{P}(S_T(\cdot) \in U_\varepsilon (f)) =
\lim_{\varepsilon\to 0} \liminf_{T\to\infty} \frac{1}{T} \ln \mathbb{P}(S_T(\cdot) \in U_\varepsilon (f)) = - I(f),
$$
where $\mathbb{D}[0,1]$ is the space of c\`adl\`ag functions, i.e. the functions that are continuous from the right, and have a limit from the left; and where $U_\varepsilon (f) := \{ g\in \mathbb{D}[0,1]: \sup_{t\in[0,1]} |f(t)-g(t)| \le \varepsilon\}$. 

The LLDP was proved in \cite{logachov2018local} for the compound Poisson process with almost uniform catastrophes. We note only here that the process $S(\cdot)$ is the special case of the processes considered in \cite{logachov2018local}. In order to write the corresponding rate function we need to remind that any function with the finite variation can be uniquely represented as a difference of two nondecreasing functions $f^+$ and $f^-$ such that ${\rm Var} f_{[0,1]} = {\rm Var} f^+_{[0,1]} + {\rm Var} f^-_{[0,1]}$. The functions $f^+$ and $f^-$ are called the positive and negative variations of the function $f$ respectively. Now, the rate function for $S_T(\cdot)$  can be represented as follows
\begin{equation}\label{RF}
I(f) = \gamma_- + \int_0^1 \left( \dot{f}^+ (t) \ln \Bigl( \frac{\dot{f}^+(t)}{\gamma_ +} \Bigr) - \gamma_+ \Bigl( \frac{\dot{f}^+(t)}{\gamma_+} -1\Bigr) \right) \mathbf{I} \left(\dot{f}^+ (t) > \gamma_+\right) dt,
\end{equation}
where $\dot{f}$ stands for the derivative of function $f$ and $\mathbf{I}$ is the indicator function. 

	\begin{figure}[h!]
		\centering
	           \includegraphics[scale=0.55]{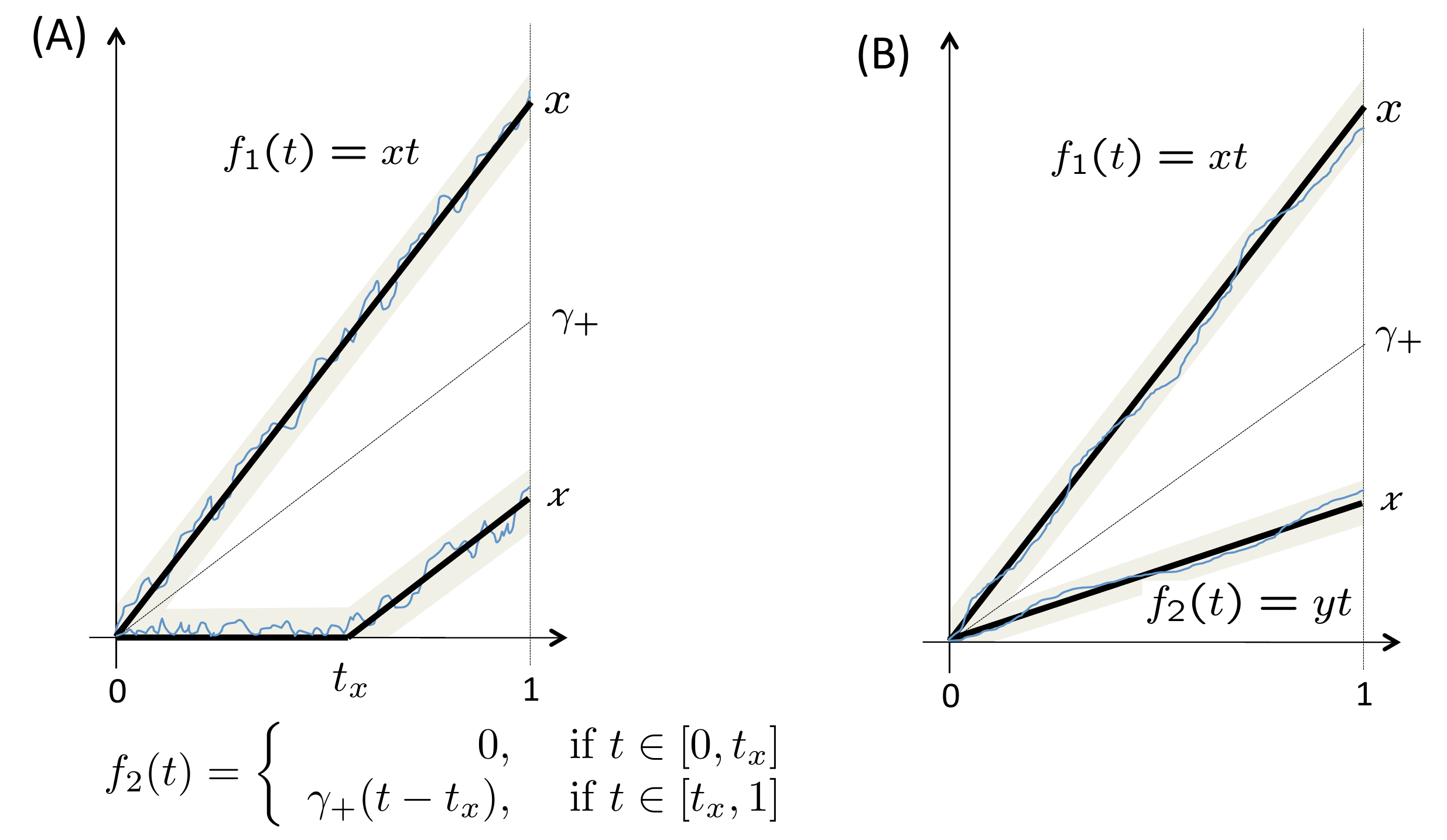} 
		 \caption{The optimal trajectories for (A) Spread process, which is a Poisson process (of rate $\gamma_+$) with uniform catastrophes (of rate $\gamma_-$), and (B) Poisson process with rate $\gamma_+$. If $x<\gamma_+$ then the large deviation occurs according to the functions $f_2$. If $x \geq \gamma_+$ then the large deviation trajectory is in the neighborhood of the straight line $f_1$.}
                      \label{Optimal}
	\end{figure}  

We note that the both large deviation principle and local large deviation have the same normalization factor for the probabilities, $1/T$. It provides the existence of an \textit{optimal} trajectory for the large deviations. The existence of the optimal trajectories of large deviations $S_T(1) >x$ were established in \cite{logachov2019large}. If $x<\gamma_+$ then there exists the moment $t_{x}=1-\frac{x}{\gamma_+} \in (0,1)$ such that the spread process $S_T(\cdot)$ stays near the zero up to the time $t_{x}$ and after that $S_T(t)$, $t\geq t_{x}$ increases according to the straight line which starts at point $(t_{x},0)$ and grows up to the point $(1,x)$ with the slope $\gamma_+$, see the function $f_2$ on Fig. \ref{Optimal}-(A). If $x\geq \gamma_+$ then the process grows together with the straight line starting from origin up to the point $(1,x)$, i.e., its slope $x$, the function $f_1$ on Fig. \ref{Optimal}-(A). For illustrative purposes of comparison, in the Fig.~\ref{Optimal} we represent the optimal trajectories which provide large fluctuations for the Poisson process with rate $\gamma_+$ and the process $S_T$, that is the Poisson process (of rate $\gamma_+$) with uniform catastrophes (of rate $\gamma_-$). 
%%--------------------------------------------------

\subsection{Large Deviations for the prices $(P_b(t), P_a(t))$} 

The large deviation result for spread suggests the question about the behavior of prices under the large spread. The rate function corresponding the large deviation is essentially the rate function of a Poisson process with rate $\gamma_+$ which consists of the rates $\alpha_+$ and $\beta_-$. Here we provide some qualitative behavior of optimal price trajectories without the proof. The qualitative picture is represented in Figure~\ref{OptimalP}. 

The main difference between the behavior of optimal trajectories of Poisson processes and our process is the presence of the indicator in the rate function, see (\ref{RF}). The indicator restricts the values of line slope -- it cannot be lesser then the rate of Poisson process. Thus, when the scaled spread is lesser then $\gamma_+$ there exists the ``bifurcation'' point $t_x$, and after that the slopes of two lines are $\alpha_+$ for the upper line and $-\beta_+$ for the lower one. The slopes changes when the scaled spread is greater then $\gamma_+$, but the relation between contributions of rates $\alpha_+$ and $-\beta_+$ is the same.

\begin{figure}[h!]
		\centering
	           \includegraphics[scale=0.55]{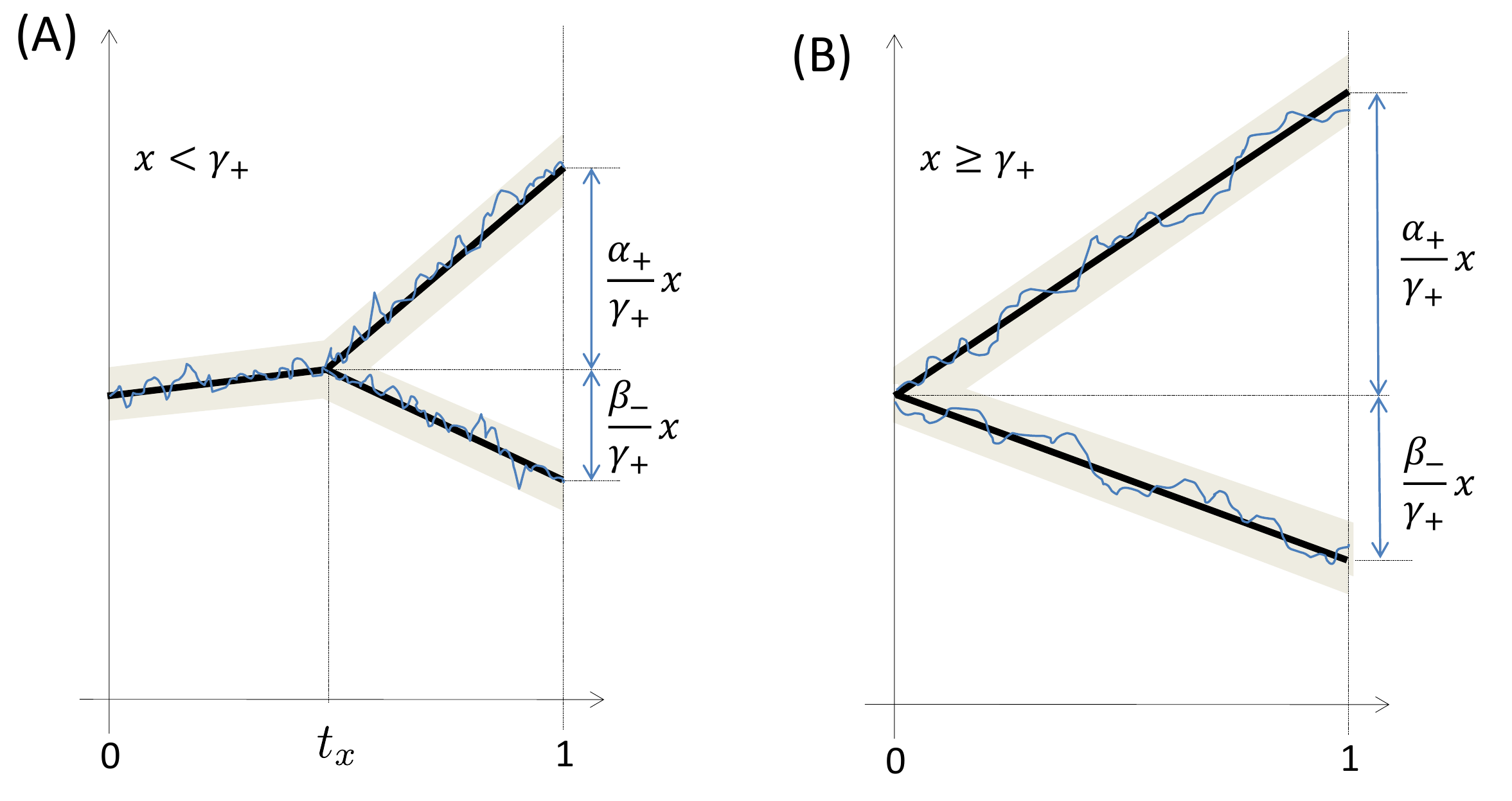} 
		 \caption{The optimal trajectories for prices $(P_b(t), P_a(t))$ under the large deviation of the spread when (A) the scaled spread $x$ is lesser then $\gamma_+$, which is consists of the rates $\beta_-, \alpha_+$, i.e.  $\gamma_+= \beta_- + \alpha_+$; and (B) the scaled spread $x \ge \gamma_+$.}
                      \label{OptimalP}
	\end{figure}

\subsection{Numerical Results}

In this section, we explore steady state properties of our proposed model using Monte Carlo simulation. We compare empirically observed long-term behavior (unconditional properties) of the OB to simulations of the fitted model. The goal of these simulations is to indicate how well the model reproduces the average properties of the OB. The transitions rates of $X(t)$ can be estimated by 

\begin{equation*}
\hat{\alpha}_+=\frac{N_{\alpha_+}}{T}, \quad  \hat{\alpha}_-=\frac{N_{\alpha_-}}{T}, \quad  \hat{\beta}_+=\frac{N_{\beta_+}}{T}, \quad \hat{\beta}_-=\frac{N_{\beta_-}}{T}.
\end{equation*}

where $T$ is the length of our sample (in seconds),  $N_{\alpha_+}$ ($N_{\alpha_-}$)  is the total number of jumps where the ask price increases (decreases) and $N_{\beta_+}$ ($N_{\beta_-}$)  is the total number of jumps where the bid price increases (decreases). The fitted values for Apple  Inc. stock, for a 15 minute sample, are: $\hat{\alpha}_+=5$, $\hat{\alpha}_-=3$, $\hat{\beta}_+=2$ and $\hat{\beta}_-=4$. Based on the estimation of parameters $(\hat{\alpha}_+, \hat{\alpha}_-,  \hat{\beta}_+,  \hat{\beta}_-)$, we simulate the price process $X(t)$ over a long horizon of $900$ seconds, which corresponds to what was empirically observed, and observe the evolution of prices in two time windows. The results are displayed in Figure \ref{Simul_prices}. The results of our simulations illustrate that our model reproduces realistic characteristics for price behavior, both short and long term, which were presented for empirical data in Figures \ref{short_long_prices}. % and \ref{PricesSLR}.

	\begin{figure}[h!]
		\centering
		\begin{subfigure}[b]{0.35\linewidth}
			\includegraphics[width=\linewidth]{images/Simul_shortprice_1.pdf} 
		\end{subfigure}
		\begin{subfigure}[b]{0.35\linewidth}
			\includegraphics[width=\linewidth]{images/Simul_longprice_1.pdf}
		\end{subfigure}
		\begin{subfigure}[b]{0.35\linewidth}
			\includegraphics[width=\linewidth]{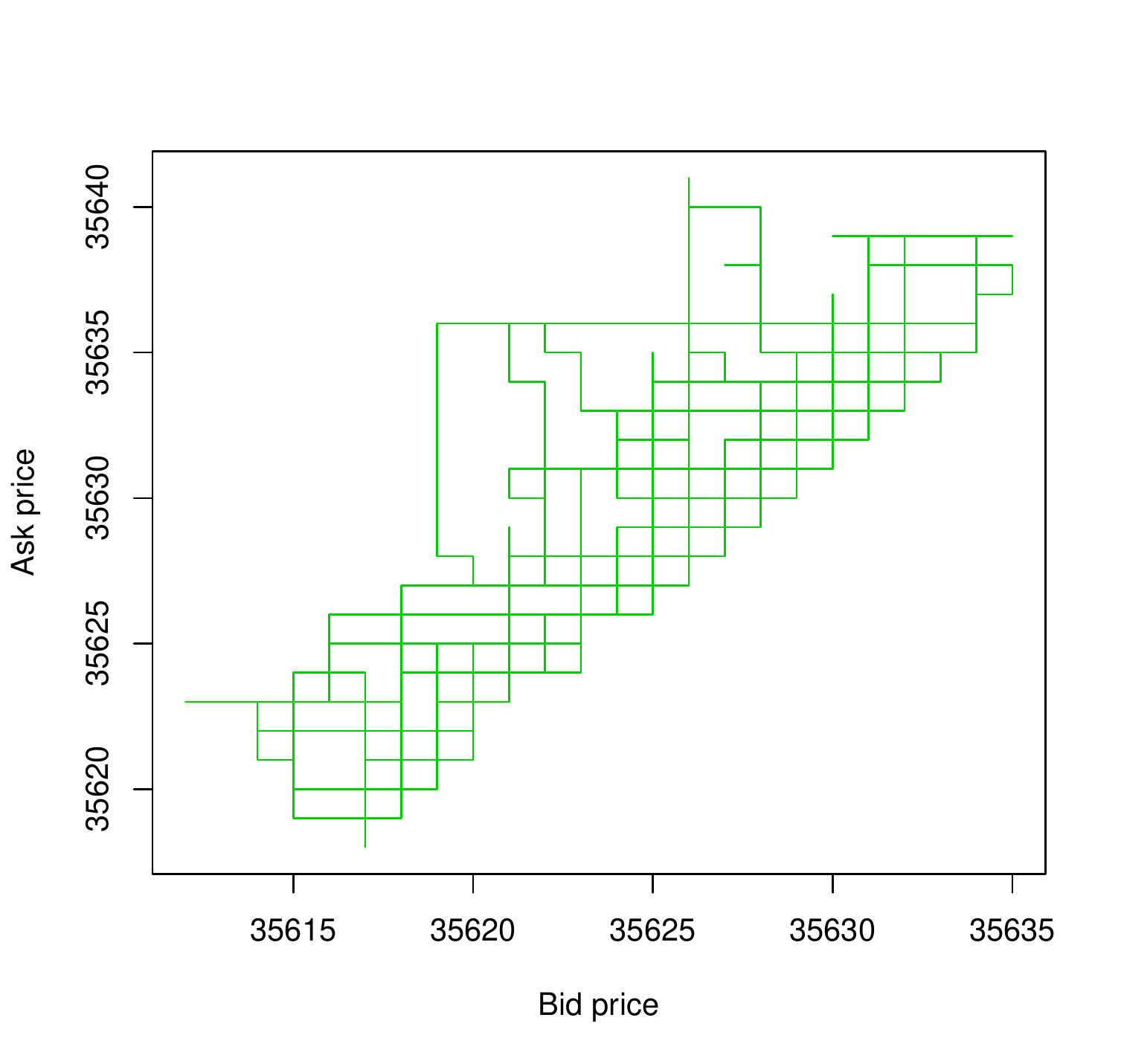} 
		\end{subfigure}
		\begin{subfigure}[b]{0.35\linewidth}
			\includegraphics[width=\linewidth]{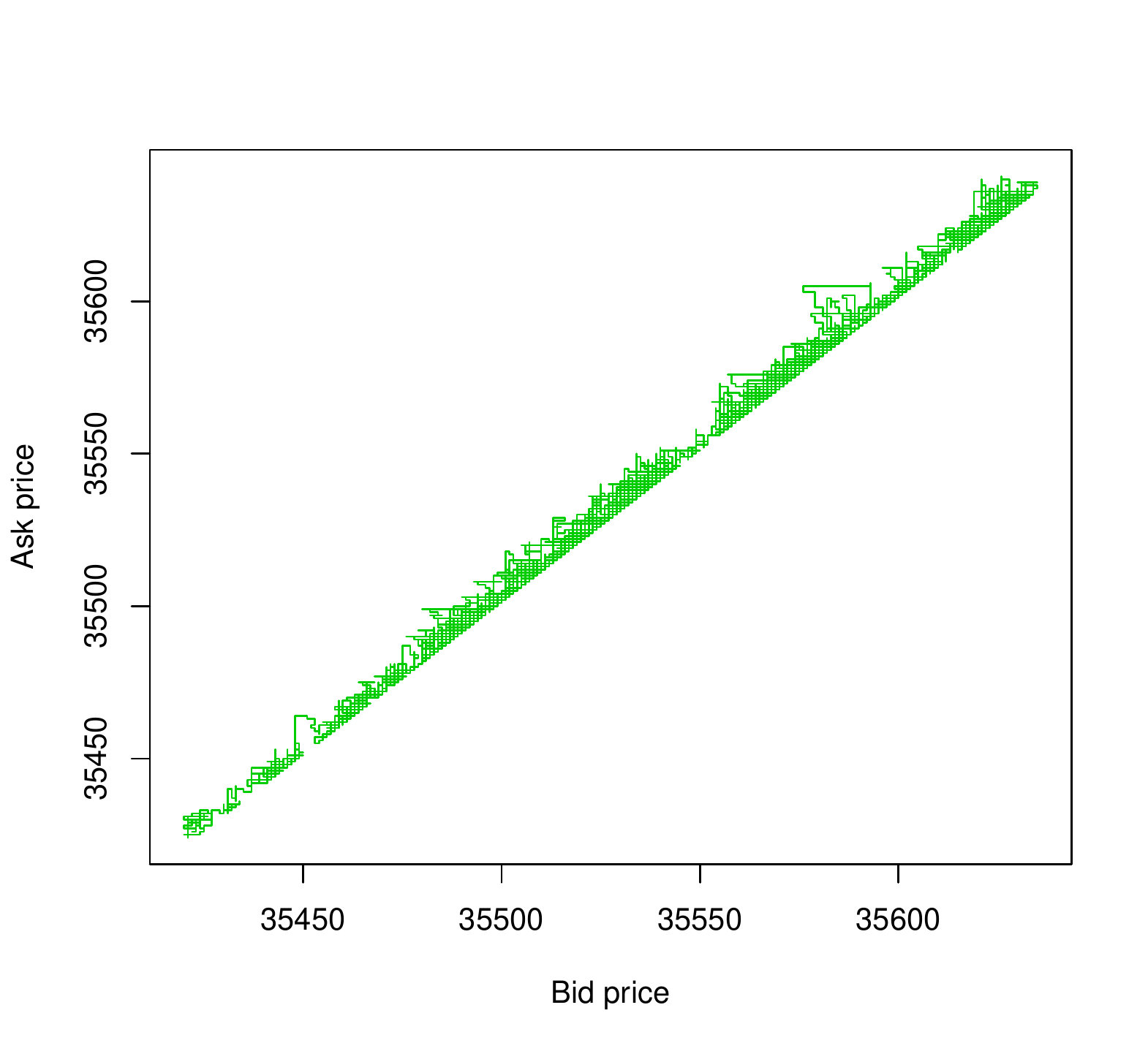}
		\end{subfigure}
    \caption{ Simulation of the order book with parameters: $\hat{\alpha}_+=5$, $\hat{\alpha}_-=3$, $\hat{\beta}_+=2$ and $\hat{\beta}_-=4$. Upper left: Short-term evolution of bid (blue) and ask (red) prices, 1 minute sample. Upper right: Long-term evolution of the prices, 15 minute sample. Bottom left: Short-term path of the price process $X(t)$, 1 minute sample. Bottom right: Long-term path of the price process, 15 minute sample. }
     \label{Simul_prices}
	\end{figure}  
	
The simulation results illustrate that our model also yields  realistic features for (steady state) average behavior of the OB profile, within which we can highlight the negative autocorrelation of price changes at first lag. It is empirically observed that high frequency price movements have a negative autocorrelation at the first lag. The autocorrelation function of transaction price returns is strongly negative at the first lag and then it rapidly decreases to zero, see Figure \ref{acf}. This is the well-known bid-ask bounce (\cite{roll1984simple}) and is mainly due to the presence of two trading prices, one for buyer and one for seller initiated transactions. This negative autocorrelation disappears when one considers aggregate returns, and it is therefore a typical microstructural effects that it is important that it be considered in a model for the OB. Our model satisfactorily reproduces this empirical fact.

	\begin{figure}[h!]
		\centering
	           \includegraphics[scale=0.50]{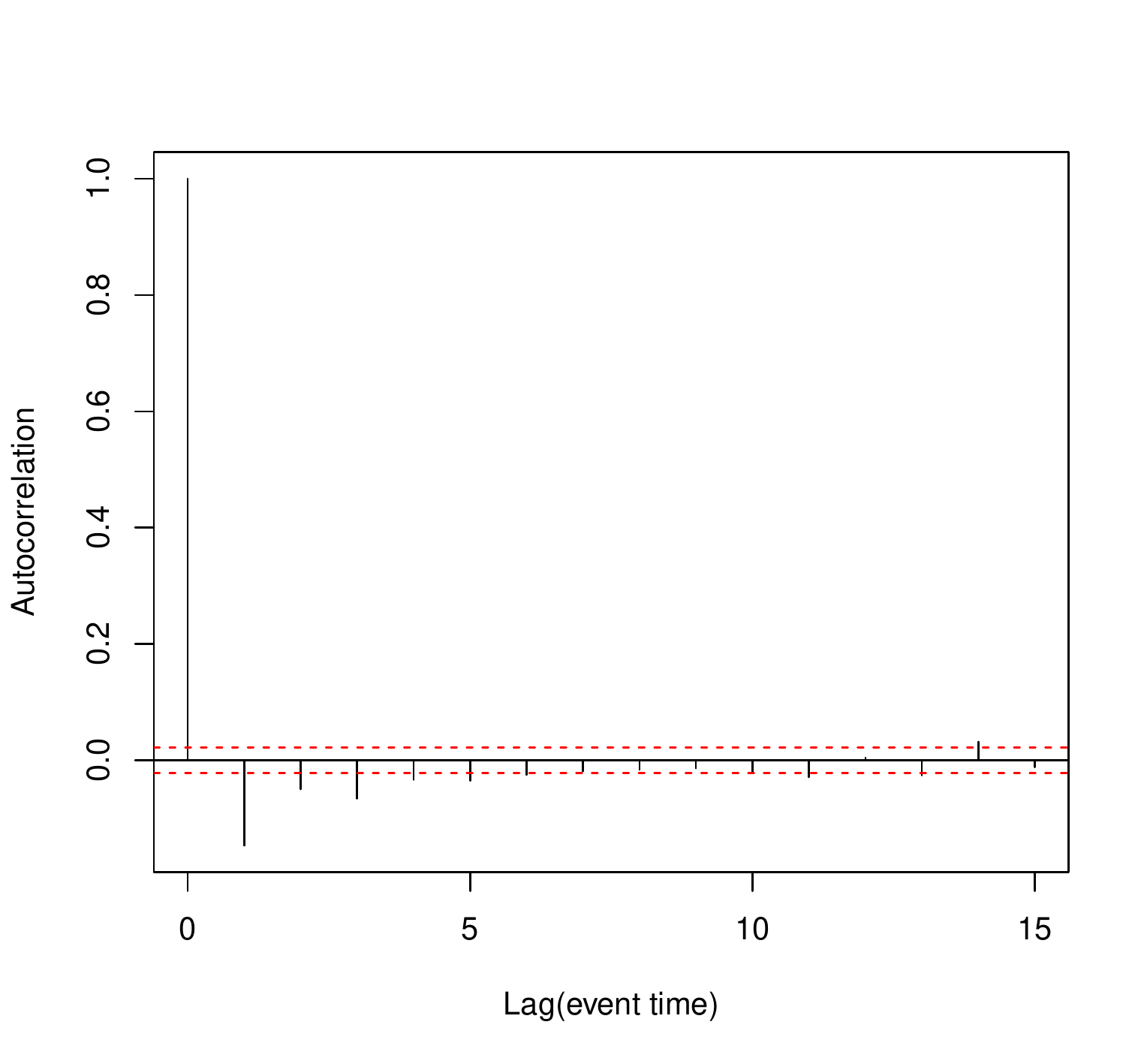} 
		 \caption{Autocorrelation function of price return based our simulations of the order book.}
                      \label{acf}
	\end{figure}  
	
Finally, a quantity particularly important for traders who are interested in trading in a short time scale is the probability that the price will increase ar the next move given a state of the OB. Based on our model this probability is very simple to calculate and is given by

\begin{equation*}
\mathbb{P}\big(\Delta P >0 \mid OB\big)=\frac{\alpha_++\beta_+}{\gamma_++\gamma_-}
\end{equation*}
where $\Delta P$ defined to be the next midprice move, and for OB defined to be price process $X(t)$ and their respective transition rates. It is expected that predictions based on this quantity will have a better long-term performance. Since in trading the interest is very short-term predictions, it would be interesting to include the bid and ask queue size in our model, this may be a subject of future research.

%%--------------------------------------------------

\section{Discussion. Other regimes}\label{other_regimens}

In this section we formulate the other two regimes that can be considered within our general model. As previously mentioned, the main results of this article can be generalized for these other two regimes, although we believe that qualitatively there will be no significant differences between the results. We hope that these two formulations that follow will motivate future works that generalize our results.

\subsection{Almost uniform catastrophes}\label{almostUC}

As we mentioned before the large deviations were proved for the so-called \textit{almost uniform} catastrophes. Recall, that in order to close the spread of length $k$ (with probability $\frac{\gamma_-}{\gamma}$) we choose the next state for the spread with the same probability (uniformly) from the set $I_k=\{1, \dots, k-1\}$, denote these probabilities as $Q_i(k), i\in I_k$, and here $Q_i(k)=\frac{1}{k-1}$. The \textit{almost-uniform} distribution is defined by the following form of probabilities $Q_i(k), 1\le i \le k-1$: there exists a constant $c>1$ such that for all $k\in\mathbb{N}$
$$
\frac{1}{c (k-1)} \le Q_i(k) \le \frac{c}{k-1},
$$
for all $i\in I_k$. It extends the class of models for high-competitive regimes. For example, for any length $k$ of the spread, it can be divided into some parts, say two part: and we say that with probability $0.7$ we choose the one part and then uniformly the state from this part is chosen, with probability $0.3$ the second part is chosen and corresponding state we choose uniformly.      

All the proofs above can be slightly modified.  

\subsection{Non-competitive regime}

The main features of the Non-competitive regime (NC Regime) is small openings of the spread, as in HC regime, due to the absence of gaps in the OB, but slow decreasing (power law) of the spread, because the agents that place the limit orders within the spread prioritize an optimal price in their quotes. Compared to the HC regime the agents are less impatient. With some constant rate, the spread open by one tick. For closing spread, let $k>1$ be the spread size, the variation in the prices $\Delta$ is chosen from $I_k=\{1, \dots, k-1\}$ according the rate which is proportional to $\Delta^{-\mu}$ where $\mu$ is a fixed positive number. The parameter $\mu$ can be interpreted as a behavioral measure for agents to obtain a more favorable price in their negotiations.

\subsubsection*{Model: closing the spread polynomially}

In order to define the rates for NC regime, let us again fix parameters $\alpha_{+}^c, \alpha_{-}^c, \beta_{+}^c, \beta_{-}^c,$ which are strictly positive real numbers. 
Suppose that at some moment the chain is at some state $(b,a)\in \mathbb{X}$, then the transition rates for the Markov chain $X(t)$ in this regime are defined by the following way

\begin{equation}\label{HLmodel}
\begin{split}
\alpha_+(\Delta) = \left\{ \begin{array}{rl} \alpha_{+}^c, & \mbox{ if }\Delta =1;\\ 0, & \mbox{ otherwise}; \end{array}\right.
\ \ 
\alpha_-(\Delta) = \left\{ \begin{array}{rl} \frac{\alpha_{-}^c}{\Delta^\mu}, & \mbox{ if } a-b >1\mbox{ for any }\Delta \in I_{a-b}; \\ 0, & \mbox{ otherwise}; \end{array}\right.
\\ 
\beta_-(\Delta) = \left\{ \begin{array}{rl} \beta_{-}^c, & \mbox{ if }\Delta =1;\\ 0, & \mbox{ otherwise}; \end{array}\right.
\ \
\beta_+(\Delta) = \left\{ \begin{array}{rl} \frac{\beta_{+}^c}{\Delta^\mu}, & \mbox{ if } a-b >1\mbox{ for any }\Delta \in I_{a-b}; \\ 0, & \mbox{ otherwise}. \end{array}\right.
\end{split}
\end{equation}

For illustration see Figure \ref{HLNCmodelFig} in the case when $a-b = 3$.

\begin{figure}[h!]
\centering
\includegraphics[scale=0.6]{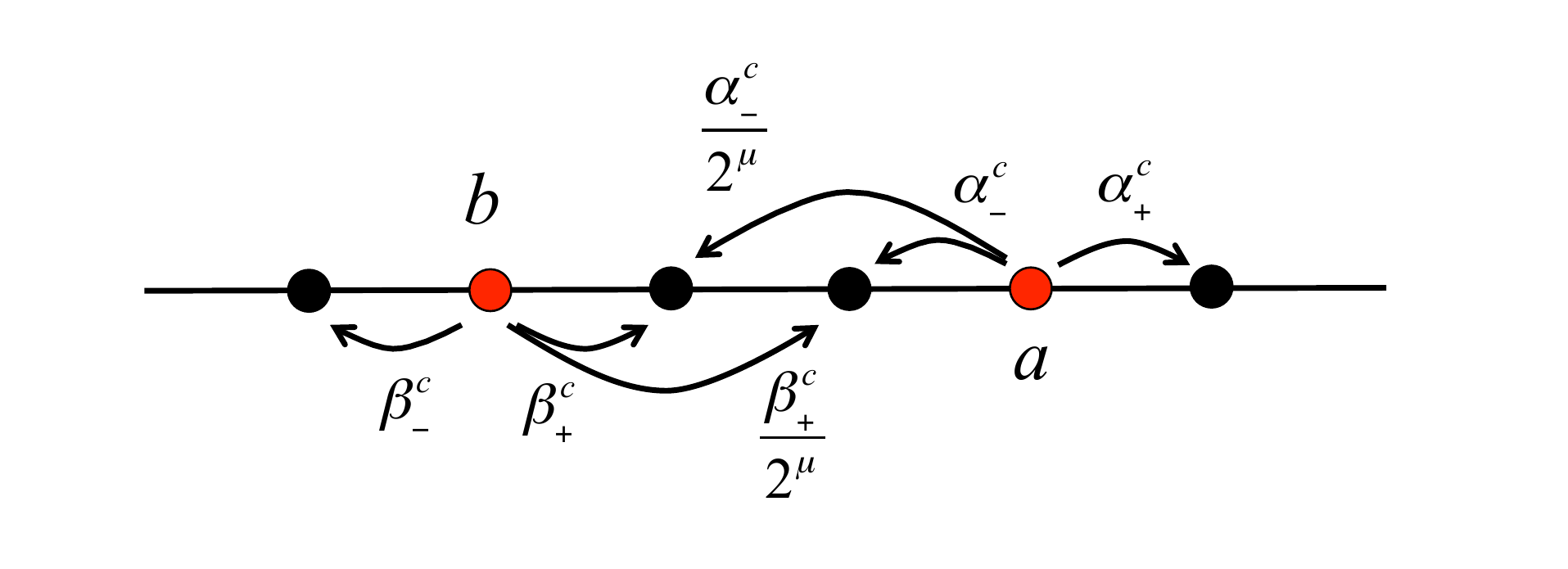}
\caption{The rates for Non-Competitive model. Illustrative example for the case when $a-b = 3$.}
\label{HLNCmodelFig}
\end{figure}

Again, as before, first we study the Markov chain $S(t)$. Suppose that at some moment $t$ the chain is at some state $k \in \mathbb{N}$, and let $\gamma_+^c:=\beta_{-}^c+\alpha_{+}^c$ and $\gamma_-^c:=\beta_+^c + \alpha_{-}^c$, then 

\begin{equation}\label{HLNCmodelS}
\begin{aligned}
k & \to k+1 & \mbox{ with rate } & \ \ \gamma_+^c,\\ 
k & \to k-\Delta & \mbox{ with rate } & \ \ \frac{\gamma_-^c}{\Delta^\mu}  \textrm{ for }  \Delta \in I_{k}.
\end{aligned}
\end{equation}
These transitions suggest that the spread dynamics have a slow reversal process to their typical values, this is because each liquidity provider competes with the others to spread closing.

\subsection{Low liquidity with gaps regime}

The main feature of low liquidity with gaps regime (LLG regime) is that the spread can open by more than one tick, this is due to the existence of gaps in the OB. The spread decreases similarly as NC regime. 

\subsubsection*{Model}

Let us fix parameters $\alpha_{+}^l, \alpha_{-}^l, \beta_{+}^l, \beta_{-}^l, \kappa_a, \kappa_b$ and $\theta\in(0,1)$, which are strictly positive real numbers. Suppose that at some moment the chain is at some state $(b,a)\in \mathbb{X}$, then the transition rates for the Markov chain $X(t)$ for this regime are defined as follows.
\begin{equation}\label{LLmodel}
\begin{aligned}
\alpha_+(\Delta) &= \left\{ \begin{array}{cl} \frac{\alpha_{+}^l}{(a-b)^{\kappa_a} } \cdot \theta(1-\theta)^{\Delta-1}, & \mbox{ for }\Delta \in \mathbb{N};\\ 0, & \mbox{ otherwise}; \end{array}\right.
\\ 
\alpha_-(\Delta) &= \left\{ \begin{array}{cl} \frac{\alpha_{-}^l}{(a-b)^{\kappa_a} } \cdot \frac{\theta(1-\theta)^{\Delta-1}}{1-(1-\theta)^{a-b-1}}, & \mbox{ if } a-b >1\mbox{ for }\Delta \in I_{a-b}; \\ 0, & \mbox{ otherwise}; \end{array}\right.
\\ 
\beta_-(\Delta) &= \left\{ \begin{array}{cl} \frac{\beta_{-}^l}{(a-b)^{\kappa_b} } \cdot \theta(1-\theta)^{\Delta-1}, & \mbox{ for }\Delta \in \mathbb{N};\\ 0, & \mbox{ otherwise}; \end{array}\right.
\\
\beta_+(\Delta) &= \left\{ \begin{array}{cl} \frac{\beta_{-}^l}{(a-b)^{\kappa_b} } \cdot \frac{\theta(1-\theta)^{\Delta-1}}{1-(1-\theta)^{a-b-1}}, & \mbox{ if } a-b >1\mbox{ for }\Delta \in I_{a-b}; \\ 0, & \mbox{ otherwise}. \end{array}\right.
\end{aligned}
\end{equation}
For illustration see Figure~\ref{LLmodelFig} in the case when $a-b = 3$.
\begin{figure}[!htb]
\centering
\includegraphics[scale=0.6]{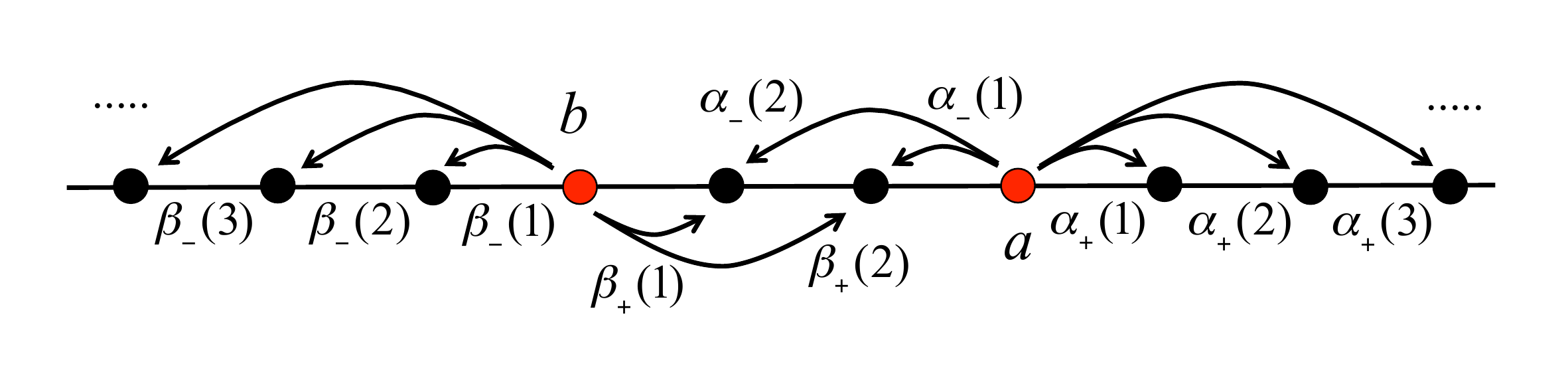}
\caption{The rates for Low Liquidity model. Illustrative example for the case when $a-b = 3$.}
\label{LLmodelFig}
\end{figure}

This Markov chain can be described more easily informally in the following way: for a given state $(b,a)$
\begin{enumerate}
\item[] with the rate $\frac{\alpha_+^l}{(a-b)^{\kappa_a}}$ the chain decides to increase the ask price, and it choose the increment according the geometric distribution with parameter $\theta$;

\item[] with the rate $\frac{\alpha_-^l}{(a-b)^{\kappa_a}}$ the chain decides to decrease the ask price, and it choose the increment according the truncated geometric distribution with parameter $\theta$ and values $I_{a-b} = \{1, \dots, a-b-1\}$;

\item[] with the rate $\frac{\beta_-^l}{(a-b)^{\kappa_b}}$ the chain decides to decrease the bid price, and it choose the increment according the geometric distribution with parameter $\theta$;

\item[] with the rate $\frac{\beta_+^l}{(a-b)^{\kappa_b}}$ the chain decides to increase the bid price, and it choose the increment according the truncated geometric distribution with parameter $\theta$ and values $I_{a-b} = \{1, \dots, a-b-1\}$.
\end{enumerate}

Again, as before, first we study the Markov chain $S(t)$. Suppose that at some moment $t$ the chain is at some state $k \in \mathbb{N}^*$
%, and let $\gamma_+^c:=\beta_{-}^c+\alpha_{+}^c$ and $\gamma_-^c:=\beta_+^c + \alpha_{-}^c$, then 
\begin{equation}\label{LLmodelS}
\begin{aligned}
k & \to k+\Delta & \mbox{ with rate } & \ \ \Bigl( \frac{\alpha_{+}^l}{k^{\kappa_a} } + \frac{\beta_{-}^l}{k^{\kappa_b} } \Bigr)  \cdot  \theta(1-\theta)^{\Delta-1},\mbox{ for }\Delta \in \mathbb{N},\\ 
k & \to k-\Delta & \mbox{ with rate } & \ \ \Bigl( \frac{\alpha_{-}^l}{k^{\kappa_a} } + \frac{\beta_{+}^l}{k^{\kappa_b} } \Bigr) \cdot \frac{\theta(1-\theta)^{\Delta-1}}{1-(1-\theta)^{k-1}}, \textrm{ for }  \Delta \in I_{k}.
\end{aligned}
\end{equation}

%%--------------------------------------------------

  %  \section{Conclusions}

%%--------------------------------------------------
\section*{Acknowledgments}

We thank Vadim Scherbakov for fruitful discussions. Furthermore,  Anatoly Yambartsev thanks the support of FAPESP via grant 2017/10555-0.	

%%--------------------------------------------------
	%------------------------------- Bibliografia
	
	%\bibliographystyle{plainnat}
	\bibliographystyle{apa}
	\bibliography{references}

\begin{thebibliography}{}

\bibitem[\protect\astroncite{Avellaneda
  et~al.}{2011}]{avellaneda2011forecasting}
Avellaneda, M., Reed, J., and Stoikov, S. (2011).
\newblock Forecasting prices from level-i quotes in the presence of hidden
  liquidity.
\newblock {\em Algorithmic Finance}, 1(1):35--43.

\bibitem[\protect\astroncite{Biais et~al.}{1995}]{biais1995empirical}
Biais, B., Hillion, P., and Spatt, C. (1995).
\newblock An empirical analysis of the limit order book and the order flow in
  the paris bourse.
\newblock {\em the Journal of Finance}, 50(5):1655--1689.

\bibitem[\protect\astroncite{Biais and Weill}{2009}]{biais2009liquidity}
Biais, B. and Weill, P.-O. (2009).
\newblock Liquidity shocks and order book dynamics.
\newblock Technical report, National Bureau of Economic Research.

\bibitem[\protect\astroncite{Bouchaud et~al.}{2009}]{bouchaud2009markets}
Bouchaud, J.-P., Farmer, J.~D., and Lillo, F. (2009).
\newblock How markets slowly digest changes in supply and demand.
\newblock In {\em Handbook of financial markets: dynamics and evolution}, pages
  57--160. Elsevier.

\bibitem[\protect\astroncite{Cont and De~Larrard}{2013}]{cont2013price}
Cont, R. and De~Larrard, A. (2013).
\newblock Price dynamics in a markovian limit order market.
\newblock {\em SIAM Journal on Financial Mathematics}, 4(1):1--25.

\bibitem[\protect\astroncite{Cont and Mueller}{2019}]{cont2019stochastic}
Cont, R. and Mueller, M.~S. (2019).
\newblock A stochastic partial differential equation model for limit order book
  dynamics.
\newblock {\em Available at SSRN 3366536}.

\bibitem[\protect\astroncite{Cont et~al.}{2010}]{cont2010stochastic}
Cont, R., Stoikov, S., and Talreja, R. (2010).
\newblock A stochastic model for order book dynamics.
\newblock {\em Operations research}, 58(3):549--563.

\bibitem[\protect\astroncite{Doyne~Farmer et~al.}{2004}]{doyne2004really}
Doyne~Farmer, J., Gillemot, L., Lillo, F., Mike, S., and Sen, A. (2004).
\newblock What really causes large price changes?
\newblock {\em Quantitative finance}, 4(4):383--397.

\bibitem[\protect\astroncite{Golub et~al.}{2012}]{golub2012high}
Golub, A., Keane, J., and Poon, S.-H. (2012).
\newblock High frequency trading and mini flash crashes.
\newblock {\em Available at SSRN 2182097}.

\bibitem[\protect\astroncite{Jones}{2004}]{jones2004markov}
Jones, G.~L. (2004).
\newblock On the markov chain central limit theorem.
\newblock {\em Probability surveys}, 1(299-320):5--1.

\bibitem[\protect\astroncite{Logachov et~al.}{2018}]{logachov2018local}
Logachov, A., Logachova, O., and Yambartsev, A. (2018).
\newblock The local principle of large deviations for compound poisson process
  with catastrophes.
\newblock {\em arXiv preprint arXiv:1806.07459}.

\bibitem[\protect\astroncite{Logachov et~al.}{2019}]{logachov2019large}
Logachov, A., Logachova, O., and Yambartsev, A. (2019).
\newblock Large deviations in a population dynamics with catastrophes.
\newblock {\em Statistics \& Probability Letters}, 149:29--37.

\bibitem[\protect\astroncite{Menshikov and
  Petritis}{2014}]{menshikov2014explosion}
Menshikov, M. and Petritis, D. (2014).
\newblock Explosion, implosion, and moments of passage times for
  continuous-time markov chains: a semimartingale approach.
\newblock {\em Stochastic Processes and Their Applications}, 124(7):2388--2414.

\bibitem[\protect\astroncite{Meyn and Tweedie}{2012}]{meyn2012markov}
Meyn, S.~P. and Tweedie, R.~L. (2012).
\newblock {\em Markov chains and stochastic stability}.
\newblock Springer Science \& Business Media.

\bibitem[\protect\astroncite{Popov}{1977}]{popov1977geometric}
Popov, N. (1977).
\newblock Geometric ergodicity conditions for countable markov chains.
\newblock In {\em Doklady Akademii Nauk}, volume 234, pages 316--319. Russian
  Academy of Sciences.

\bibitem[\protect\astroncite{Roll}{1984}]{roll1984simple}
Roll, R. (1984).
\newblock A simple implicit measure of the effective bid-ask spread in an
  efficient market.
\newblock {\em The Journal of finance}, 39(4):1127--1139.

\bibitem[\protect\astroncite{Smith et~al.}{2003}]{smith2003statistical}
Smith, E., Farmer, J.~D., Gillemot, L.~s., Krishnamurthy, S., et~al. (2003).
\newblock Statistical theory of the continuous double auction.
\newblock {\em Quantitative finance}, 3(6):481--514.

\end{thebibliography}

\clearpage
%%--------------------------------------------------
%\appendix
%\section{Appendix}\label{macros}

%\vspace{4cm}

%\begin{figure}[h!]
		%\centering
	           %\includegraphics[scale=0.5]{images/cross_Q.pdf}
		% \caption{}
                      %\label{C_Q_bid_ask}
	%\end{figure}  

%-------------------------------------------------------

%\clearpage	

\end{document}